\newtheorem{theorem}{Theorem}[section]
\newtheorem{proposition}[theorem]{Proposition}
\newtheorem{lemma}[theorem]{Lemma}
\newtheorem{remark}[theorem]{Remark}
\numberwithin{equation}{section}
\DeclareMathOperator*{\E}{\mathbb{E}}
\DeclareMathOperator*{\Id}{Id}
\DeclareMathOperator*{\rank}{rank}
\def \R {\mathbb{R}}
\def \P {\mathbb{P}}
\def \NN {\mathcal{N}}
\def \F  {{F}}
\def \eps {\varepsilon}
\def \z {\zeta}
\def \moo| {\langle}
\def \< {\langle }
\def \> {\rangle }
\def \^ {\widehat}
\newcommand{\yp}[1]{\textcolor{black}{#1}}
\newcommand{\rg}[1]{\textcolor{black}{#1}}
\newcommand{\norm}[1]{\left \|#1\right \|}
\newcommand{\zeronorm}[1]{\norm{#1}_0}
\newcommand{\onenorm}[1]{\norm{#1}_1}
\newcommand{\twonorm}[1]{\norm{#1}_2}
\newcommand{\opnorm}[1]{\norm{#1}}
\newcommand{\abs}[1]{\left | #1 \right |}
\renewcommand{\Pr}[1]{\P \left\{ #1 \rule{0mm}{3mm}\right\}}
\newcommand{\vect}[1]{\bm{#1}}
\newcommand{\mat}[1]{\bm{#1}}
\def \x {\vect{x}}
\def \y {\vect{y}}
\def \X {\mathcal{X}}
\def \A {\mat{A}}
\def \H {\mat{H}}
\def \V {\mat{V}}
\def \z {\vect{z}}
\def \OM {\mat{\Omega}}
\def\cosp {\ell} % variable for cosparsity
\title{On the Effective Measure of Dimension \\ in the Analysis Cosparse Model}
\author{Raja~Giryes\thanks{R. Giryes is with the Department
of Electrical and Computer Engineering, Duke University, Durham,
NC, 27708 USA (e-mail: raja.giryes@duke.edu).},
        Yaniv~Plan\thanks{Y. Plan is with the Department of Mathematics, University of British Columbia, 1984 Mathematics Road, Vancouver, BC V6T 1Z2, Canada (e-mail: yaniv@math.ubc.ca).},
        and~Roman~Vershynin\thanks{R.~Vershynin is with the Department of Mathematics, University of Michigan, 530 Church
St., Ann Arbor, MI 48109, U.S.A (e-mail:romanv@umich.edu).}
\thanks{Manuscript received October 3, 2014. RG is partially supported by AFOSR. 
YP is partially supported by an NSF Postdoctoral Research Fellowship under award No.
1103909. RV  is partially supported by NSF under grant DMS 1265782,  by USAF
under grant FA9550-14-1-0009, and by Humboldt Research Foundation.}
}
\begin{document}
\maketitle

\begin{abstract}
Many applications have benefited remarkably from low-dimensional models in the recent decade.
The fact that many signals, though high dimensional, are intrinsically low dimensional has given the possibility to recover them stably from a relatively small number of their measurements. For example, in \textit{compressed sensing} with the standard (synthesis) sparsity prior and in \textit{matrix completion}, the number of measurements needed is proportional (up to a logarithmic factor) to the signal's manifold dimension. 

Recently, a new natural low-dimensional signal model has been proposed: the cosparse analysis prior.    
In the noiseless case, it is possible to recover signals from this model, using a combinatorial search, from a number of measurements proportional to the signal's manifold dimension. 
However, if we ask for stability to noise or an efficient (polynomial complexity) solver, all the existing results demand a number of measurements which is far removed from the manifold dimension, sometimes far greater. Thus, it is natural to ask whether this gap is a deficiency of the theory and the solvers, or if there exists a real barrier in recovering the cosparse signals by relying only on their manifold dimension.  
Is there an algorithm which, in the presence of noise, can accurately recover a cosparse signal from a number of measurements proportional to the manifold dimension?  In this work, we prove that there is no such algorithm.  Further, we show through numerical simulations that even in the noiseless case convex relaxations fail when the number of measurements is comparable to the manifold dimension. This gives a practical counter-example to the growing literature on compressed acquisition of signals based on manifold dimension.
\end{abstract}

\section{Introduction}

Low-dimensional signal models have played an important role in many  signal processing applications in the recent decade, where in many cases the use of these models has provided state of the art results \cite{Bruckstein09From, Romberg08Imaging, Willett11Compressed, Elad10Sparse}. All have relied on the fact that the treated signals, which have high ambient dimension, reside in a low dimensional manifold \cite{Lu08Theory, Blumensath09Sampling, Eldar12Compressed, Foucart13Mathematical, Plan14High, Vershynin14High}, or a union of manifolds, e.g., a union of subspaces.

The fact that a signal belongs to a low dimensional manifold may make it possible to recover it from few measurements. This is exactly the essence of the compressed sensing \cite{Donoho06Compressed, Candes06Near} and matrix completion \cite{Candes09Exact} problems.

Our core problem is to recover an unknown signal $\x \in \R^d$ from a  given set of its linear measurements 
\begin{eqnarray}
\y = \A \x + \z,
\end{eqnarray}
where $\A \in \R^{m \times d}$ is the measurement matrix with $m \ll d$ and $\z$ is an additive noise which can be either adversarial with bounded energy \cite{Candes06Near, Donoho06Stable, Candes05Decoding} or random with a certain given distribution, e.g.,  Gaussian \cite{Candes07Dantzig}. 

In the standard compressed sensing problem the low dimensionality of $\x$ is modeled using the synthesis sparsity model. The signal $\x$  is assumed to be either explicitly sparse, i.e., has a small number of non-zeros, or with a sparse representation under a given dictionary $\mat{D} \in \R^{d \times n}$. 

If every group of $2k$ columns in $\A\mat{D}$ (if $\x$ is explicitly sparse then $\mat{D} = \Id$) are independent, where $k$ is the sparsity of the signal, then $\x$ can be uniquely recovered from $\y$ in the noiseless case ($\z = 0$) \cite{Donoho03Optimal, Giryes13CanP0} using a combinatorial search. 

The above observation shows that it is possible to recover the signal with a number of measurements of the order of its manifold dimension $k$. However, the \rg{combinatorial} search is not feasible for any practical size of $d$ \cite{NP-Hard}. Therefore several relaxation techniques have been proposed \cite{Candes06Near, Chen98overcomplete, MallatZhang93, Needell10Signal, Needell09CoSaMP, Dai09Subspace,Blumensath09Iterative,Foucart11Hard}.
It has been shown that if $\A\mat{D}$ is a subgaussian random matrix
or a partial Fourier matrix then it is possible using these practical methods to recover $\x$ from $\y$ using only $O(k \log^c(n))$ measurements ($c$ is a given constant) \cite{Candes05Decoding, Rudelson06Sparse}. 
Note that up to the log factor, the number of needed measurements is of the order of the manifold dimension of the signal.

Noise is easily incorporated into these results.  The same number of measurements guarantees robustness to adversarial noise. 
Further, in the case of random white Gaussian noise the recovery error turns to be of the order of $k \sigma^2 \log(n)$ \cite{Candes07Dantzig,Bickel09Simultaneous, Giryes12RIP}, where $k\sigma^2$ is roughly the energy of the noise in the low dimensional subspace the signal resides in.  

Similar results hold outside of the standard (synthesis) sparsity model of compressed sensing.  Indeed, in the \textit{matrix completion} problem, in which the signal is a low-rank matrix, once again one may reconstruct the matrix from a number of measurements proportional to the manifold dimension \cite{Candes09Exact, Candes10Matrix}.  Further, there exists a body of literature giving an abstract theory of signal reconstruction based on manifold dimension \cite{Vershynin14High, Eldar12Uniqueness, baraniuk2009random, wakin2010manifold, yap2011stable, eftekhari2013new}.  
 
\subsection{The Cosparse Analysis Model}
Recently, a new signal sparsity framework has been proposed: the cosparse analysis model \cite{elad07Analysis, Nam12Cosparse} that looks at the behavior of the signal after applying a given operator $\OM \in \R^{p \times d}$ on it.  We introduce it with an important example: the vertical and horizontal finite difference operator (2D-DIF).  The motivation for the usage of this operator is that in many cases the image is not sparse but its gradient is, so the signal ``becomes sparse'' only after the application of 2D-DIF. Therefore it is common to represent the structure of an image through its behavior after the application of the 2D-DIF operator.  
Note that 2D-DIF, when used with $\ell_1$-minimization (See \eqref{eq:analysis_l1_noiseless} hereafter), corresponds to the anisotropic two dimensional total variation (2D-TV) \cite{Needell13Stable}.

For simplicity, we suppose that $d$ is a square number and consider an $\sqrt{d} \times \sqrt{d}$ signal matrix $X$, e.g., an image.  Then the horizontal differences operator, $\H$, and the vertical finite differences operator, $\V$, are defined as follows:
\begin{eqnarray}
& \H(X)_{i,j} := X_{i,j} - X_{i, j+1} \\    & \V(X)_{i,j} := X_{i,j} - X_{i+1, j}
\end{eqnarray}
with addition of indices being done mod $\sqrt{d}$.  One may then unfold the signal $X$ into vector form, and correspondingly represent $\H$ and $\V$ as $d \times d$ matrices.  Thus, we take $\OM \in \R^{2d \times d}$, in the 2D-DIF model, to be the horizontal differences matrix stacked on top of the vertical differences matrix .  

In the general case, for a given  operator $\OM \in \R^{p \times d}$, the cosparse analysis model  assumes that $\OM \x$ should be sparse.
The subspace in which the signal resides is characterized by the zeros in $\OM\x$. The number of zeros in $\OM\x$ is denoted as the cosparsity of the signal.  
Each zero entry characterizes a row in $\OM$ to which the signal is orthogonal. Denoting by $T$ the unknown support (the locations of the non-zero entries) of $\OM\x$  and by $\OM_T$ the submatrix of $\OM$ restricted to the rows in $T$ we have that the subspace of $\x$ is the one orthogonal to the subspace spanned by the rows of $\OM_{T^c}$.  In what follows, define 
\begin{eqnarray}
K_T := \{ \x \in \R^d : \OM_{T^c} \x = 0\}
\end{eqnarray}
to be such a subspace. \rg{Notice that the dimension of $K_T$, which we denote by $b$, equals  $d- \rank(\OM_{T^c})$.}
 In general, $T$ is not known, and so it is natural to assume a signal structure of the following form
\begin{equation}
\label{eq:signal structure}
K_b := \bigcup_{\text{dim}(K_T) = b} K_T.
\end{equation}
Note that this is a finite union of $b$-dimensional subspaces.  \rg{As each signal in $K_b$ belongs to a subspace of dimension $b$, we say that $K_b$ has a {\em manifold dimension} $b$.}\footnote{\yp{See \cite{lee2003smooth} for a definition of manifolds and manifold dimension.  We note that $K_b$ is technically not a manifold, but this can be easily remedied.  Consider the slightly smaller set: $K'_b = K_b \backslash \bigcup K_T \cap K_{T'}$ in which we take the union over all $T \neq T'$ such that $\text{dim}(K_T) = \text{dim}(K_{T'}) = b$.  $K'_b$ is a manifold, and none of the proofs would change under this alternative definition of the signal set.  Nevertheless, we define $K_b$ as in Equation \eqref{eq:signal structure} for simplicity of presentation.}}

Recent literature \cite{Needell13Stable, Candes11Compressed, Liu12Compressed,  Giryes13Greedy, Kabanava13Analysis, Giryes13TDIHT} shows that $\x$ may be reconstructed efficiently and stably from $O(\abs{T} \log(p) )$ random linear measurements.  How does this compare to the manifold dimension of the signal?  
Assume $\abs{T}$ is fixed and $\OM$ is in general position, e.g., each entry is taken from a Gaussian ensemble. Then $\x \in K_b$, where $b = (d - p + \abs{T})_+$ (as $\x$ is orthogonal to $p - \abs{T}$ rows in $\OM$). In particular, if \rg{$\abs{T} = p-d+1$} then $b=1$ and therefore $\x$ resides in a $1$-dimensional subspace. Surprisingly, modern theory requires more than \rg{$\abs{T} = p-d+1$} measurements for recovering the signal, i.e., more measurements than the ambient dimension \rg{in the case $ p \ge 2d$}.  

\rg{This behavior is not unique only to the case of $\OM$ in general position. For example, consider the 2D-DIF model with $b=2$. In this case $K_2$ consists of images with only two connected components.\footnote{All images in the subspace $K_T$ have the same pattern of edges, defined by the indices of $T$ (two adjacent pixels have an edge when they may take different values).  The edges of $T$ separate the image into connected components.   
If there are only two connected components it is enough to use only two numbers, which set the grey value of each component, to define each image in this subspace.  In this case, $K_T$ is a two-dimensional subspace.} However, also in this case the current theory requires the number of measurements to be proportional to $\abs{T}$, the number of edges in these images, which might be much larger than $2$. Notice that for the same manifold dimension, we may have different number of edges in different images. 
For example, in Fig.~\ref{fig:blob} the number of edges is roughly proportional to $\sqrt{d}$ and in Fig.~\ref{fig:packingpic_oneconfig} it is roughly proportional to $d$.
}

%With $T$ fixed, the dimension of the subspace in which $\x$ resides is $d - \rank(\OM_{T^c}) = d - r$, where $r := \rank(\OM_{T^c})$ is denoted as the corank of the signal.  Observe that $d-r$ is bounded by $\abs{T}$, but it may be much smaller.  For example, suppose $p =2d$, $\abs{T} = d+ 1$, and suppose $\OM$ is in general position, e.g., each entry is taken from a Gaussian ensemble.  Then $\x$ resides in a 1-dimensional subspace.  In contrast, modern theory requires more than $\abs{T} = d+1$ measurements, i.e., more measurements than the ambient dimension.  It is thus natural to ask whether the state of the art may be improved.  

\rg{Following the above two examples,} it is natural to ask whether the state of the art theory may be improved. 
Indeed, let $\A \in \R^{m \times d}$ be a Gaussian matrix and observe that by solving 
\begin{eqnarray}
\label{eq:analysis_l0_noiseless}
\min_{\x'}\zeronorm{\OM\x'} & s.t.& \y = \A\x',
\end{eqnarray}
where $\norm{\cdot}_0$ is the $\ell_0$ pseudo-norm that counts the number of non-zeros in a vector, one may recover any $\x \in K_1$ using only two measurements. In the general case there is a need for $2 b$ measurements to recover a signal in $K_b$ \cite{Nam12Cosparse}.
However, solving~\eqref{eq:analysis_l0_noiseless} is NP-hard and requires a \rg{combinatorial} search. Thus, there is a large gap between the theory for tractable, stable signal recovery and what can be done by combinatorial search with noiseless measurements.

%Notice that though solving \eqref{eq:analysis_l0_noiseless} provides an exact recovery in the noiseless case when the number of measurements is of the order of the intrinsic dimension of the signal, nothing of the sort is guaranteed for the noisy scenario. As we shall see, the upper bound we develop for the random Gaussian noise case requires $O(\abs{T} \log(p) )$ measurements.  Todo: remove preceding paragraph?

%Note that in the case of square or nearly square operator $\OM$ we do not have the above mentioned problem as $p$ equals or nearly equals $d$. Therefore, we are particularity interested in the case where $\OM$ is highly redundant, i.e., $p$ is of the order of $2d$ and more. 

\subsection{Our Contribution}

With these observations before us, it is natural to ask whether the gap between the required number of measurements and the manifold dimension is a deficiency of the utilized approximation strategies and the used proof techniques, or \rg{whether} there exists a real barrier with recovering the cosparse signals by relying only on the manifold dimension.  \yp{Is there any algorithm that can robustly reconstruct a cosparse signal from a number of measurements proportional to the manifold dimension $b$?}
%\rg{such that any algorithm would fail to recover a cosparse signal from a number of measurements proportional to the manifold dimension $b$}. 

This paper addresses this question by considering the effect of Gaussian noise.  We show that unless the number of measurements is much larger than the manifold dimension, there is no estimator that can stably reconstruct the signal.  We show this for two different analysis dictionaries: 1) the vertical and horizontal finite difference operator and 2) a random Gaussian matrix.  We show that when $m < d < p$, the error must be exponentially larger than the noise no matter what estimator is used.  

%The reason we study the case of random Gaussian measurements is that they are known to yield the ideal number of  measurements in the compressed sensing problem with the standard synthesis prior \cite{Candes06Near} and in the matrix completion problem \cite{Candes09Exact, Candes10Matrix,Eldar12Uniqueness}. 
%In addition, random Gaussian measurements plays an important role in the existing recovery conditions for the approximation techniques for compressed sensing with the analysis prior.  Our intuition is that if it is not possible to recover analysis cosparse signals uniformly from random Gaussian measurements of the order of the manifold dimension then it is not be possible to do it using other types of measurements. \rg{
%Yaniv, do you think we should keep the last sentence?. Do you have a reference to back it up or it is just intuition.} 

  We state our two main theorems below.
\begin{theorem}
\label{thm:TV_main_theorem}
Let $\OM$ be the 2D-DIF operator and let $K_2$ be the union of $2$-dimensional subspaces generated by this matrix.  
Suppose $\y = \A \x + \z$ for some $\x \in K_2$, $\A \in \R^{m \times d}$ with $\opnorm{\A} \leq 1$, and $\z \sim \NN(0, \sigma^2\cdot \Id)$.  Then for any estimator $\hat{\x}(\y)$ we have
\[\max_{\x \in K_2} \E \twonorm{\hat{\x} - \x} \geq C \sigma \exp(c d/m).\]
\end{theorem}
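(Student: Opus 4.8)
The plan is to reduce the theorem to a statistical two-point (Le Cam) argument and then construct the two hard-to-distinguish signals by a counting/pigeonhole argument over connected regions of the grid.

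\textbf{Step 1 (Le Cam reduction).} First I would reduce to producing two signals $\x_0,\x_1\in K_2$ whose noisy measurements are statistically close while the signals themselves are far apart. Indeed, if $\twonorm{\A\x_0-\A\x_1}\le\sigma$, then the laws $\NN(\A\x_0,\sigma^2\Id)$ and $\NN(\A\x_1,\sigma^2\Id)$ have total variation bounded away from $1$ (their KL divergence is $\le 1/2$), and Le Cam's two-point inequality gives $\max_{\x\in\{\x_0,\x_1\}}\E\twonorm{\hat{\x}-\x}\ge c\,\twonorm{\x_0-\x_1}$ for \emph{every} estimator. Thus the theorem follows once I exhibit $\x_0,\x_1\in K_2$ with $\twonorm{\A(\x_0-\x_1)}\lesssim\sigma$ and $\twonorm{\x_0-\x_1}\gtrsim\sigma\exp(cd/m)$; note the construction is allowed to depend on the given $\A$. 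I would take both signals to be scaled indicators of connected regions, $\x_i=t\,\one_{S_i}$, where each $S_i$ is a connected subset of the grid with connected complement, so that $\x_i$ lies in the two-dimensional subspace $\mathrm{span}(\one_{S_i},\one_{S_i^c})\subseteq K_2$.

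\textbf{Step 2 (pigeonhole to align the measurements).} Because $\opnorm{\A}\le1$, every indicator satisfies $\twonorm{\A\one_S}\le\twonorm{\one_S}\le\sqrt d$, so the points $\{\A\one_S\}$ all lie in a Euclidean ball of radius $\sqrt d$ in $\R^m$ — this is the only place the normalization $\opnorm{\A}\le1$ is used, and it is essential to confine the images. Partitioning this ball into cells of side $\delta$ produces at most $(C\sqrt d/\delta)^m$ cells. If I have a family of $N>(C\sqrt d/\delta)^m$ connected regions, two of them, $S$ and $S'$, land in a common cell, whence $\twonorm{\A(\one_S-\one_{S'})}\le\delta\sqrt m$. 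Choosing $\delta\approx\sqrt d\,N^{-1/m}$ makes the budget tight and yields a colliding pair with $\twonorm{\A(\one_S-\one_{S'})}\lesssim\sqrt{dm}\,N^{-1/m}$; for $N=\exp(\Omega(d))$ this equals $\exp(-\Omega(d/m))$ up to polynomial factors, which is exactly the compression needed after scaling by $t\approx\sigma/\twonorm{\A(\one_S-\one_{S'})}$.

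\textbf{Step 3 (a well-separated family of connected regions).} For the colliding pair to give far-apart signals I also need $\twonorm{\one_S-\one_{S'}}=\sqrt{|S\triangle S'|}$ to be large, so I would fix the family in advance to be Hamming-separated. The number of connected subsets (polyominoes) of an $n$-cell grid grows like $\lambda^n$ with $\lambda>1$, so the grid carries $\exp(\Omega(d))$ connected regions; since a Hamming ball of radius $\rho d$ contains at most $\exp(H(\rho)d)$ subsets (with $H$ the binary entropy), a greedy volumetric argument extracts a sub-family of $N=\exp(\Omega(d))$ connected regions — with connected complements, e.g.\ simply connected regions of size $\approx d/2$ — that are pairwise at Hamming distance $\ge c'd$. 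Running Step 2 on this family, the colliding pair automatically satisfies $|S\triangle S'|\ge c'd$, so $\twonorm{\x_0-\x_1}=t\sqrt{|S\triangle S'|}\gtrsim\sigma\exp(cd/m)$ after absorbing polynomial factors into $c$, and Step 1 closes the argument.

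\textbf{Main obstacle.} The information-theoretic reduction and the pigeonhole are routine; the crux is the combinatorial construction in Step 3 — producing exponentially many connected regions that are simultaneously of connected complement (so each signal lies in $K_2$) and pairwise Hamming-far — together with checking that the surviving polynomial factors can be absorbed into the exponent across the whole range $m<d$. In particular the pairing yields an extra factor $1/\sqrt m$ (from the cell-count dimension), which is harmless when $m\ll d$ but delicate as $m\to d$. The natural remedy is to upgrade the two-point bound to a multi-point Fano inequality over the $\exp(\Omega(d))$ regions sharing a single heavy cell, gaining a $\sqrt{\log N}\approx\sqrt d$ factor that compensates the polynomial losses.
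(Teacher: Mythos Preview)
Your high-level strategy matches the paper's: build an exponentially large, well-separated family in $K_2$, apply a pigeonhole/volumetric argument in $\R^m$ to find two members whose images under $\A$ nearly coincide, and finish with a Le Cam two-point bound. The paper packages the last two steps into a general proposition: any $\delta$-packing $\X$ of $K\cap B^d$ yields $\sup_{\x\in K}\E\twonorm{\hat{\x}-\x}\ge\delta\sigma\abs{\X}^{1/m}/32$, proved exactly via Lemma~2.4 (volumetric min-distance in $B^m$) and the same Bayes two-point computation you describe.

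The substantive difference is in Step~3. Rather than invoking polyomino growth rates and a Gilbert--Varshamov deletion, the paper gives an explicit template: a periodic column pattern on the $\sqrt d\times\sqrt d$ grid in which $q=n(n-2)/3\ge d/4$ pixels can each be independently assigned to either region while both regions remain connected, so all of $\{\pm 1/n\}^q$ embeds into $K_2\cap S^{d-1}$. Separation then reduces to the Hamming cube and is handled by a random choice plus Hoeffding. Your polyomino sketch is morally fine, but the step from ``connected polyominoes grow like $\lambda^d$'' to ``connected regions \emph{with connected complement} in a $\sqrt d\times\sqrt d$ grid number $\exp(\Omega(d))$'' is not a direct citation of the standard growth constant; the paper's explicit template sidesteps this entirely and is both shorter and self-contained.

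Finally, the ``main obstacle'' you flag is an artifact of your cube-based pigeonhole. If you first normalize the family to lie in $B^d$ (as the paper does) and use the ball-packing bound---any $N$ points in $B^m$ contain a pair within $4/N^{1/m}$---the spurious $\sqrt m$ disappears and the two-point argument already delivers $C\sigma\exp(cd/m)$ with no polynomial deficit. The Fano upgrade is unnecessary.
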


\begin{theorem}
\label{thm:Gaussian_main_theorem}
Let $\OM \in \R^{p \times d}$ be a matrix with independent standard normal entries and let $K_1$ be the union \rg{of} $1$-dimensional subspaces generated by this matrix.  Suppose $\y = \A \x + \z$ for some $\x \in K_1$, $\A \in \R^{m \times d}$ with $\opnorm{\A} \leq 1$, and $\z \sim \NN(0, \sigma^2\cdot \Id)$.  Then for any estimator $\hat{\x}(\y)$ we have
\[ \max_{\x \in K_1} \E \twonorm{\hat{\x} - \x} \geq 
 C \sigma \exp\left(C\frac{d-1}{m}\left(1 - \frac{d-2}{p}\right)\right).\]
\end{theorem}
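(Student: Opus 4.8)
The plan is to prove a minimax lower bound by combining a packing argument with Fano's inequality, the point being that $\A$, having rank at most $m$, is forced to compress exponentially many well separated directions of $K_1$ into a tiny region of $\R^m$. First I would record the geometry of $K_1$: a unit direction $\vect{v}\in S^{d-1}$ spans a line of $K_1$ exactly when it is orthogonal to some $d-1$ of the $p$ rows of $\OM$, so writing $S=T^c$ for that index set, the direction $\vect{v}_S$ is (up to sign) the unit orthogonal complement of $\mathrm{span}(\OM_i:i\in S)$. Since the rows of $\OM$ are i.i.d.\ Gaussian, each $\vect{v}_S$ is marginally uniform on $S^{d-1}$, there are $\binom{p}{d-1}$ of them, and $K_1$ is a cone, so $R\vect{v}_S\in K_1$ for every $R>0$.

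Next I would set up the testing reduction. Suppose we can exhibit unit directions $\vect{v}_1,\dots,\vect{v}_N\in K_1$ that are well separated on the sphere, $\twonorm{\vect{v}_i-\vect{v}_j}\ge c_0$, yet whose images cluster, $\twonorm{\A\vect{v}_i-\A\vect{v}_j}\le\eta$. Put $\x_i=R\vect{v}_i\in K_1$. The observations $\y=\A\x_i+\z$ then have Gaussian laws $P_i$ with $D_{\mathrm{KL}}(P_i\|P_j)=\frac{1}{2\sigma^2}\twonorm{\A(\x_i-\x_j)}^2\le \frac{R^2\eta^2}{2\sigma^2}$, while $\twonorm{\x_i-\x_j}\ge c_0R$. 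By Fano's inequality, if $R^2\eta^2/\sigma^2\lesssim\log N$ then every test misidentifies the index with probability bounded away from $0$, and the usual reduction (decode $\hat{\x}$ to the nearest $\x_i$) forces $\max_i\E\twonorm{\hat{\x}-\x_i}\gtrsim c_0R$. Taking $R\asymp\sigma\sqrt{\log N}/\eta$ yields the error bound $\gtrsim c_0\sigma\sqrt{\log N}/\eta$.

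The exponential comes from trading $N$ against $\eta$. Starting from a $c_0$-separated packing $\QQ\subseteq K_1\cap S^{d-1}$ of size $M$, I cover the image set $\A(S^{d-1})\subseteq B_2^m(0,1)$ by balls of radius $\eta$; there are at most $(3/\eta)^m$ of them, so some ball captures $N\ge M(\eta/3)^m$ of the images, and those directions are $c_0$-separated with images clustered within $2\eta$. Optimizing $\eta$ in $c_0\sigma\sqrt{\log N}/\eta$ subject to $\log N=\log M-m\log(3/\eta)$ (the optimum sits at $\log N\asymp m$, i.e.\ $\eta\asymp M^{-1/m}$) gives
\[ \max_{\x\in K_1}\E\twonorm{\hat{\x}-\x}\;\gtrsim\;\sigma\sqrt{m}\,\exp\!\Big(\tfrac{\log M}{m}\Big). \]

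It remains to lower bound the separated-packing number $M$, and this is the main obstacle. The count is governed by how two directions interact: if $|S\cap S'|=d-1-r$, both $\vect{v}_S,\vect{v}_{S'}$ lie in the $(r+1)$-dimensional space $W=\mathrm{span}(\OM_i:i\in S\cap S')^\perp$, and inside $W$ they are orthogonal complements of two independent families of Gaussian vectors, hence behave like two independent uniform directions in $\R^{r+1}$; their inner product therefore concentrates at scale $1/\sqrt{r+1}$, so supports that differ in enough coordinates yield separated directions. For random $(d-1)$-subsets the typical per-side difference is $r\approx(d-1)(1-(d-1)/p)$, which is precisely the local dimension $\asymp(d-1)\frac{p-d+2}{p}$ appearing in the theorem. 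The delicate part is to upgrade this typical pairwise behavior into a \emph{simultaneously} separated family: one selects a constant-weight code of $(d-1)$-subsets of $[p]$ with pairwise symmetric difference $\gtrsim\log M$ (via a Gilbert--Varshamov/greedy bound), then applies Gaussian concentration of the inner products with a union bound over the $\binom{M}{2}$ pairs to certify $c_0$-separation. Carrying out the optimization gives $\log M\gtrsim(d-1)\frac{p-d+2}{p}$ (which is $\asymp d-1$ once $p\gg d$, reflecting that $K_1$ then essentially fills the sphere), and substituting into the displayed bound produces $C\sigma\exp\!\big(C\frac{d-1}{m}(1-\frac{d-2}{p})\big)$. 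Since the packing $\QQ$ depends on $\OM$ only while the compression step uses solely $\rank(\A)\le m$ and $\opnorm{\A}\le1$, the bound holds for every admissible $\A$.
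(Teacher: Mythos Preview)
Your proposal is essentially correct and follows the same high-level architecture as the paper (build a large $\tfrac12$-packing of $K_1\cap S^{d-1}$, use that $\A$ is a contraction into $\R^m$ to force many packing points to have nearby images, then reduce to hypothesis testing), but the execution differs from the paper in three places.

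\textbf{Hypothesis testing.} You invoke Fano's inequality over a cluster of $N$ points. The paper uses a simpler two-point Le Cam argument: from a volumetric bound $\min_{x\ne y\in\A\X}\twonorm{x-y}\le 4/|\X|^{1/m}$ it extracts a single pair whose images are within $\sigma$ after a rescaling, and then bounds the Bayes error for the binary test. This already yields $\sup_x\E\twonorm{\hat\x-\x}\ge c\,\delta\sigma|\X|^{1/m}$, matching your $\exp(\log M/m)$ without the $\sqrt{m}$ prefactor (which you do not need anyway). Your route is valid but heavier than necessary.

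\textbf{Compression step.} You cover $\A(S^{d-1})$ by $\eta$-balls and pigeonhole to get a cluster; the paper just cites the packing-number inequality $\min\twonorm{\A x-\A y}\le 4/|\X|^{1/m}$ directly. Same content, but the paper avoids the optimization over $\eta$.

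\textbf{Packing construction.} Here is the most substantive difference. You propose a two-stage scheme: first build a constant-weight code of $(d-1)$-subsets of $[p]$ with pairwise symmetric difference $\gtrsim \log M$ via Gilbert--Varshamov, then union-bound the Gaussian concentration over all pairs. This works, but the paper folds both stages into one random construction: pick $(d-1)$-subsets $\Lambda$ uniformly and take the orthogonal unit vector; Serfling's inequality for sampling without replacement controls $|\Lambda\cap\Lambda'|$, and conditional on the overlap the two directions are independent uniform on a $(d-|\Lambda\cap\Lambda'|)$-sphere, so a spherical-cap bound gives $\Pr\{\twonorm{\x-\x'}\le\tfrac12\}\le 3\exp(-\tfrac{(d-1)(p-d+2)}{4p})$. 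A one-line ``random packing'' lemma (if $\Pr\{\twonorm{\x-\x'}<\delta\}\le\eta$ for i.i.d.\ draws then $P(K,\delta)\ge\eta^{-1/2}$) then delivers the packing of size $\exp\big(\tfrac{d-1}{8}(1-\tfrac{d-2}{p})\big)$ directly. Your code-based route reaches the same exponent but requires verifying the GV count for constant-weight codes at the relevant distance, which you left implicit; the paper's direct probabilistic construction sidesteps that bookkeeping entirely.
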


The theorems are proven in Section~\ref{sec:pack_construct} by providing packings for the sets $K_1$ and $K_2$, and combining these with a hypothesis testing argument.  
%\yp{The proposition referenced below has not yet been introduced.}\rg{I know. But the idea is to refer the reader to the places in the paper where he can find the proof. You can remove this paragraph if you want}
%The proof of the first is an application of Proposition \ref{prop:main proposition} combined with the packing bound in Lemma \ref{lem:pack finite differences}.
%The proof of the second is an application of Proposition \ref{prop:main proposition} combined with the packing bound in Lemma \ref{lem: pack Gaussian}.

We can say that both theorems show that $O(d)$ measurements are needed for any algorithm to get signal reconstruction without incurring a huge error if $p \ge 2d$; the latter assumption is implicit in Theorem~\ref{thm:TV_main_theorem} where $p=2d$. \rg{Notice that in Theorem~\ref{thm:Gaussian_main_theorem}, if $d < p < 2d$ then we need $m = O(p-d)$, which is still remarkably larger than the manifold dimension $b=1$.}
Remarkably, both theorems lower bound the efficacy of any estimator, tractable or not, and thus show that even the performance of $\ell_0$ minimization is not characterized well by the manifold dimension of the signal.
Therefore we may conclude that the fact that the needed number of measurements is of the order of $\abs{T}$  is not a result of a flaw in the existing reconstruction guarantees or a problem with the studied methods.  
We perform several experiments to demonstrate this fact and show that indeed, the size of $\abs{T}$ (determined by the cosparsity of the signal, the number of zeros in $\OM\x$) is a better measure for its compressibility than its manifold dimension.

%\rg{I have moved the explanation to here and tried to simplify it a bit. Let me know if it is better now. I believe that it would be good to explain the significance of the theorems in relation to previous works.}
%We focus on the Gaussian matrix and the 2D-DIF operator as the analysis dictionary for the following reasons. 
%The Gaussian matrix is a good example for a matrix which is in general position with probability one, and thus the cosparsity of a signal always matches its corank.
%In the 2D-DIF operator the ratio between $p$ and $d$ is always fixed. Unlike the Gaussian matrix, with the 2D-DIF operator it is not necessarily that the cosparsity equals the corank
%as there are linear dependencies within its rows. 
% In particular, the cosparsity level may even exceed the signal's ambient dimension.
% 
% For both operators it has been shown that the number of measurements needed for recovery is $O(\abs{T}\log(p))$ \cite{Giryes13Greedy,Giryes13TDIHT,Needell13Stable,Candes11Compressed} and that the cosparsity is the measure that defines the success in recovery and not the manifold dimension \cite{Giryes13Greedy, Peleg12Performance, Rubinstein12Cosparse}.

\subsection{Organization}

The paper is organized as follows. 
 Section~\ref{sec:pack_construct} gives the proofs of Theorems \ref{thm:TV_main_theorem} and  \ref{thm:Gaussian_main_theorem}.
In Section~\ref{sec:exp} we demonstrate the lower bounds we have developed through several experiments that use the $\ell_1$-minimization technique with a Gaussian matrix and the 2D-DIF operator as the analysis dictionary. In Section~\ref{sec:conc} we discuss the implications of the derived results and conclude the work.

\section{Proofs of Theorems \ref{thm:TV_main_theorem}  and \ref{thm:Gaussian_main_theorem}}
\label{sec:pack_construct}

Both of our main theorems are proven by construction of random packings, followed by a hypothesis testing argument.  Recall that a packing $\X \subset K$ with $\ell_2$ balls of radius $\delta$ is a set satisfying $\twonorm{x - y} \geq \delta$ for any $\x, \y \in \X$ with $\x \neq \y$.  We denote by $P(K, \delta)$ the maximal cardinality of such a set, i.e., the \textit{packing number}.

We now gather supporting lemmas and then put them together in Section \ref{sec:finals steps}.
We will construct a packing when $\OM$ is the vertical and horizontal differences operator and when $\OM$ is Gaussian.  In both cases, we will make a random construction using the following observation.

\begin{lemma}[Random packing]
\label{lem:random packing}
Let $\F$ be a distribution supported on some set $K \in \R^n$.  Let $\x, \x'$ be independently chosen from $\F$. Suppose that
\[\Pr{\twonorm{\x - \x'} < \delta} \leq \eta\]
for some $\eta, \delta > 0$.  Then,
\[P(K, \delta) \geq \eta^{-1/2}.\]
\end{lemma}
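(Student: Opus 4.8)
The plan is to produce the packing by the probabilistic (first moment) method: draw a large i.i.d.\ sample from $\F$ and show that, for an appropriate sample size $N$ comparable to $\eta^{-1/2}$, with positive probability all pairwise distances exceed $\delta$, so that the sample itself is a $\delta$-packing of $K$ of cardinality $N$.

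Concretely, I would fix an integer $N$ (to be chosen) and let $\x_1,\dots,\x_N$ be independent draws from $\F$. For each pair $i<j$ define the ``bad'' event $A_{ij}=\{\twonorm{\x_i-\x_j}<\delta\}$. Since $\x_i$ and $\x_j$ are two independent samples from $\F$, the hypothesis gives $\Pr{A_{ij}}\le\eta$ for every pair. A union bound then yields
\[
\Pr{\bigcup_{i<j}A_{ij}}\ \le\ \binom{N}{2}\,\eta\ \le\ \frac{N^2\eta}{2}.
\]
If $N$ is chosen so that the right-hand side is strictly less than $1$, then the complementary event---that $\twonorm{\x_i-\x_j}\ge\delta$ for all $i\ne j$---has positive probability, hence is attained by some deterministic configuration $\{\x_1,\dots,\x_N\}\subset K$. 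That configuration is by definition a $\delta$-packing, so $P(K,\delta)\ge N$. It remains to select $N$ as large as possible subject to $N^2\eta/2<1$; taking $N=\lceil\eta^{-1/2}\rceil$ keeps $N^2\eta/2$ of order $1/2$, delivering $P(K,\delta)\ge\eta^{-1/2}$.

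The one point requiring care is the interplay between the integer rounding of $N$ and the constraint $N^2\eta/2<1$: I need $N\ge\eta^{-1/2}$ and $\binom{N}{2}\eta<1$ to hold simultaneously. This is comfortable once $\eta$ is small (the only regime of interest, since in the applications $\eta$ is exponentially small), because then $\eta^{-1/2}$ is large and the slack $\binom{N}{2}\le N^2/2$ suffices; when $\eta$ is of constant order the claimed bound $P(K,\delta)\ge\eta^{-1/2}$ is a small constant and follows trivially from $K$ being nonempty. I expect this bookkeeping, rather than any genuine difficulty, to be the main thing to get right. I would also note that the stronger estimate $P(K,\delta)\gtrsim 1/\eta$ is available via the deletion (alteration) method---sample $N\approx 1/\eta$ points, discard one endpoint from each of the expected $\le\binom{N}{2}\eta$ bad pairs, and retain a $\delta$-separated subset of size $\ge N-\tfrac{N^2\eta}{2}$---but the weaker $\eta^{-1/2}$ stated here is all that the downstream hypothesis-testing argument requires, and it falls directly out of the cleaner union bound above.
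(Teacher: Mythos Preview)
Your approach is exactly the paper's: draw $N=\lceil \eta^{-1/2}\rceil$ i.i.d.\ points, union-bound over the $\binom{N}{2}$ pairs, and conclude a $\delta$-packing exists with positive probability. The only difference is bookkeeping---the paper avoids your case split on small vs.\ constant $\eta$ by using the slightly sharper $\binom{N}{2}\le \tfrac12(\eta^{-1/2}+1)\eta^{-1/2}$ (from $N-1<\eta^{-1/2}$), which gives $\binom{N}{2}\eta<\tfrac12(1+\eta^{1/2})<1$ for all $\eta<1$ in one line.
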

\begin{proof}
Let $q$ be $\eta^{-1/2}$ rounded up to the next nearest integer.  Pick $q$ points independently from $K$.  Then, by considering each of the ${q \choose 2}$ pairs and using the union bound we have
\begin{eqnarray}
\Pr{\min_{\x \neq \x'} \twonorm{\x - \x'} < \delta} \leq {q \choose 2} \eta 
 < \frac{(\eta^{-1/2} + 1) \eta^{-1/2}}{2} \cdot \eta < 1
\end{eqnarray}
so long as $\eta < 1$.  The minimum above is taken over all $\x \neq \x'$ in the $q$ random points.  Thus, with probability greater than 0, the $q$ points satisfy
\[\twonorm{\x - \x'} \geq \delta  \qquad \text{for } \x \neq \x'.\]
Thus, there must exist at least one such arrangement of points, making the requisite packing.
\end{proof}

\subsection{Packing when $\OM$ is the vertical and horizontal differences operator}

%In this section we take $\OM \in \R^{p \times d}$ to be the vertical and horizontal finite differences operator ($p \approx 2d$).
%
%We are interested in characterizing the complexity of $K_b$ in order to determine how many linear measurements should be necessary to (robustly) reconstruct a vector in $K_b$.
%
%Since it is the union of $b$ dimensional subspaces, one might expect that approximately $b$ measurements are sufficient.
%However, we show that due to the large number of subspaces in the union, robust reconstruction requires $O(d)$ measurements even when $b = 2$.  In other words, $K_2$ is about as complex as all of $\R^d$.
%
%
We construct a packing for the set $K := K_2 \cap S^{d-1}$ when $\OM$ is the vertical and horizontal finite differences operator.

\begin{lemma} Suppose that $d \geq 64$ is a square number.  Then
\label{lem:pack finite differences}
\[P(K, 1/2) \geq \exp(d/64).\]
\end{lemma}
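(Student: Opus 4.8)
The plan is to apply the random packing lemma (Lemma~\ref{lem:random packing}) with a carefully designed distribution $\F$ supported on $K = K_2 \cap S^{d-1}$. Recall that a point of $K_2$ is exactly an image taking two values, one on each of two regions $R_1, R_2$ that partition the grid graph and are each connected (so that $\OM \x$ vanishes precisely off the cut separating $R_1$ from $R_2$, giving $\dim K_T = 2$). Writing $n = \sqrt d$, I will build $\F$ so that two independent draws $\x, \x'$ disagree on a constant fraction of the $d$ pixels with overwhelming probability, which is what forces the packing number up to $\exp(d/64)$.

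First I would set up a fixed ``interdigitated'' skeleton together with a large reservoir of freely toggled cells. Partition the columns into consecutive triples (truncating the last group if $3 \nmid n$, which perturbs the count below by only $O(n)$) and declare every column $j \equiv 1 \pmod 3$ to be an \emph{A-column} and every $j \equiv 0 \pmod 3$ a \emph{B-column}; the remaining columns $j \equiv 2 \pmod 3$ are \emph{free}. On top of this, force the entire first row to the value $+1$ (an A-spine) and the entire last row to $-1$ (a B-spine). Each free column then has its top and bottom cells fixed and its $n-2$ interior cells free; I label each free cell $+1$ or $-1$ independently with probability $1/2$, and set every remaining cell to $+1$ if it lies in an A-column or the A-spine and to $-1$ otherwise. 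Normalizing by $1/\sqrt d$ makes every sample a $\pm 1/\sqrt d$ image on $S^{d-1}$.

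The crucial claim is that \emph{for every} assignment of the free cells this image has exactly two connected components, hence lies in $K_2$. The $+1$-region stays connected because the A-spine is a single connected line to which every A-column attaches, and any free cell set to $+1$ attaches to the A-column immediately on its left (which carries value $+1$ in all interior rows); symmetrically, the B-spine collects all B-columns and every free cell set to $-1$ attaches to the B-column immediately on its right. Thus each region is connected no matter how the free cells are toggled, and since they partition the grid there are exactly two components. Verifying this connectivity uniformly over all $2^{\abs{F}}$ toggle patterns, with $\abs{F} = (n/3)(n-2) = \Theta(d)$ free cells, is the step I expect to be the crux; everything else is routine.

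With the construction in hand the distance computation is immediate: $\x$ and $\x'$ agree on all fixed cells and differ by $2/\sqrt d$ on each free cell where their independent labels disagree, so $\twonorm{\x-\x'}^2 = (4/d)\,D$, where $D$ is the number of disagreeing free cells and $D \sim \mathrm{Bin}(\abs{F}, 1/2)$. Hence the event $\twonorm{\x - \x'} < 1/2$ is exactly $\{D < d/16\}$, which lies strictly below the mean $\abs{F}/2 = \Theta(d)$. A Chernoff/Hoeffding lower-tail bound then yields $\Pr{\twonorm{\x-\x'} < 1/2} = \Pr{D < d/16} \leq \exp(-d/32)$ for $d \geq 64$. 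Taking $\eta = \exp(-d/32)$ in Lemma~\ref{lem:random packing} gives $P(K, 1/2) \geq \eta^{-1/2} = \exp(d/64)$, as claimed. The only quantitative point to check is that $\abs{F}$ and the target fraction $d/16$ leave enough slack for the Hoeffding exponent to beat $d/32$, which holds comfortably once $n \geq 8$.
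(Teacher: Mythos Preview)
Your proposal is correct and follows essentially the same route as the paper: build a family of two-connected-component $\pm 1/\sqrt d$ images with $q=n(n-2)/3$ freely toggled pixels, apply Hoeffding to get $\Pr{\twonorm{\x-\x'}<1/2}\le\exp(-d/32)$, and invoke Lemma~\ref{lem:random packing}. Your column-triple/spine description is precisely the pattern the paper depicts in its figure, and your explicit connectivity verification (free $+1$ attaches left to an $A$-column, free $-1$ attaches right to a $B$-column, both tied together by the spines) fills in what the paper leaves to the picture; the only looseness---the $3\nmid n$ truncation at the boundary case $n=8$---is shared by the paper's own argument.
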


\begin{proof}
$\OM$ is most simply visualized acting on images.  Thus, set $n:=\sqrt{d}$ to be the number of vertical or horizontal pixels in an image.  We make the corresponding abuse of notation and take $\OM: \R^{n \times n} \rightarrow \R^p$ and $K \subset \R^{n \times n}$.  Each zero entry of $\OM \x$ forces two adjacent entries of $\x$ to be equal.  Thus, the set $K$ is precisely the set of \rg{normalized} images $\x$ composed of two connected components, $\x_1$ and $\x_2$, with $\x$ constant on each component (See Figure \ref{fig:blob}).  Our question reduces to constructing a packing for pairs of blobs in $\R^{n \times n}$.  We restrict our attention to blobs with the pictorial representation of Figure \ref{fig:packingpic}.
\rg{Naturally, our packing is "non-exhaustive", as it aims at enumerating a large number of images (and not all images) that reside in $K$.}

\begin{figure}
	\centering
		\begin{subfigure}[b]{.25 \textwidth}
			\includegraphics[width = \textwidth]{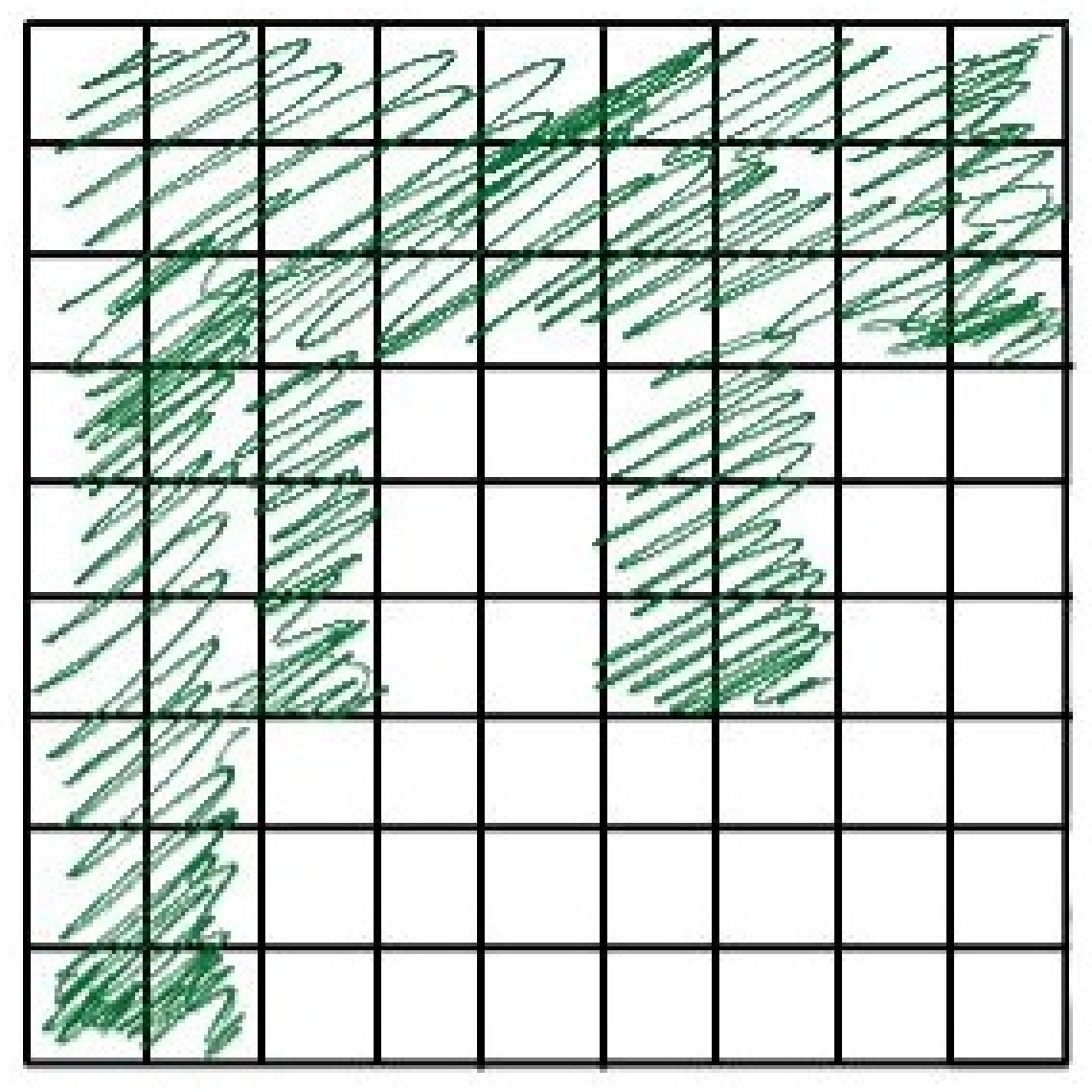}
			\caption[]{A point in $K_2$.  All green squares have one value and all white squares have another.}
						\label{fig:blob}
		\end{subfigure}
		\quad
		\begin{subfigure}[b]{.25 \textwidth}

			\includegraphics[width = \textwidth]{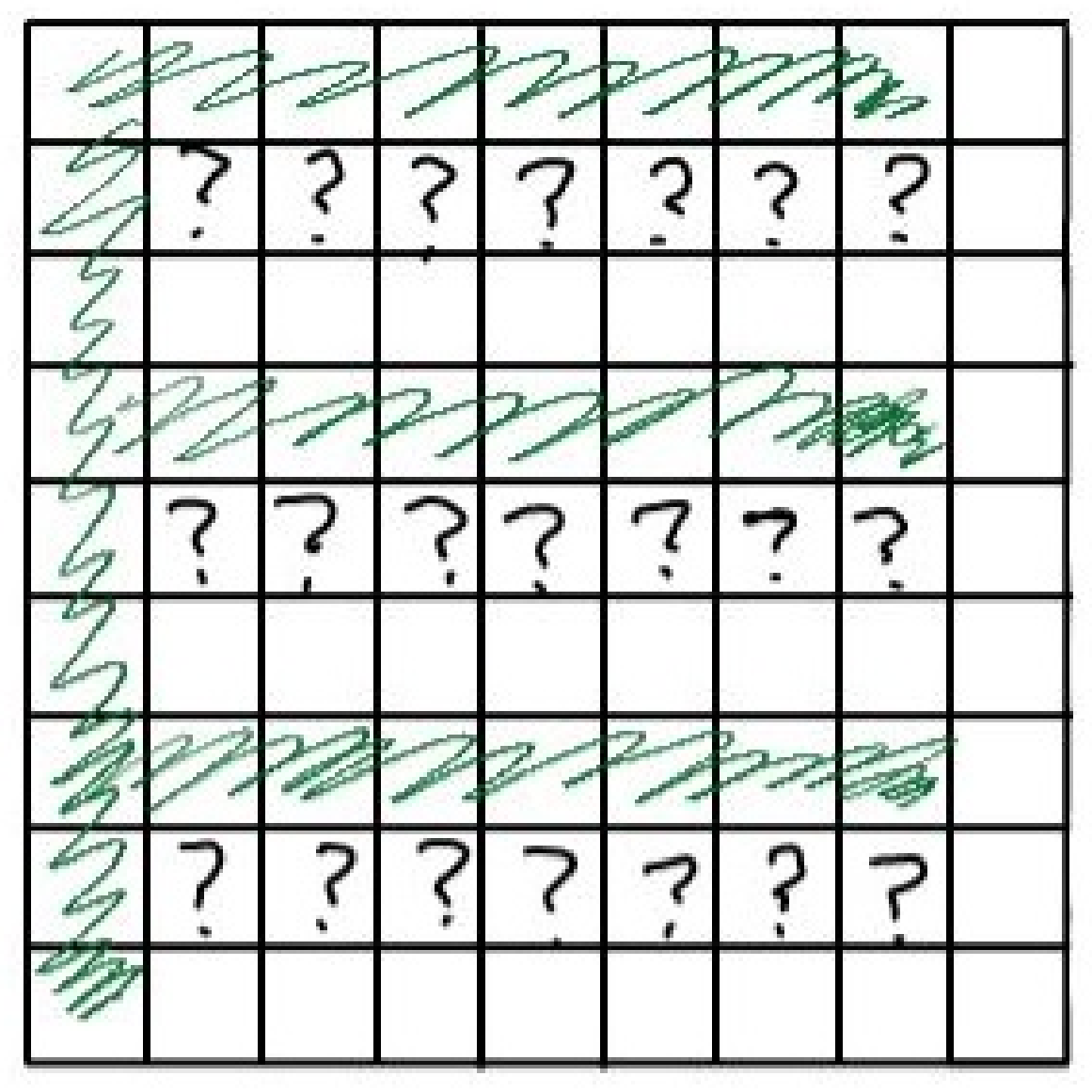}
			\caption[]{Visualization of packing patterns.  Green corresponds with $1/n$, white corresponds with $-1/n$, and ? can be either.}
						\label{fig:packingpic}
		\end{subfigure}
		\quad
		\begin{subfigure}[b]{.25 \textwidth}
			\includegraphics[width = \textwidth]{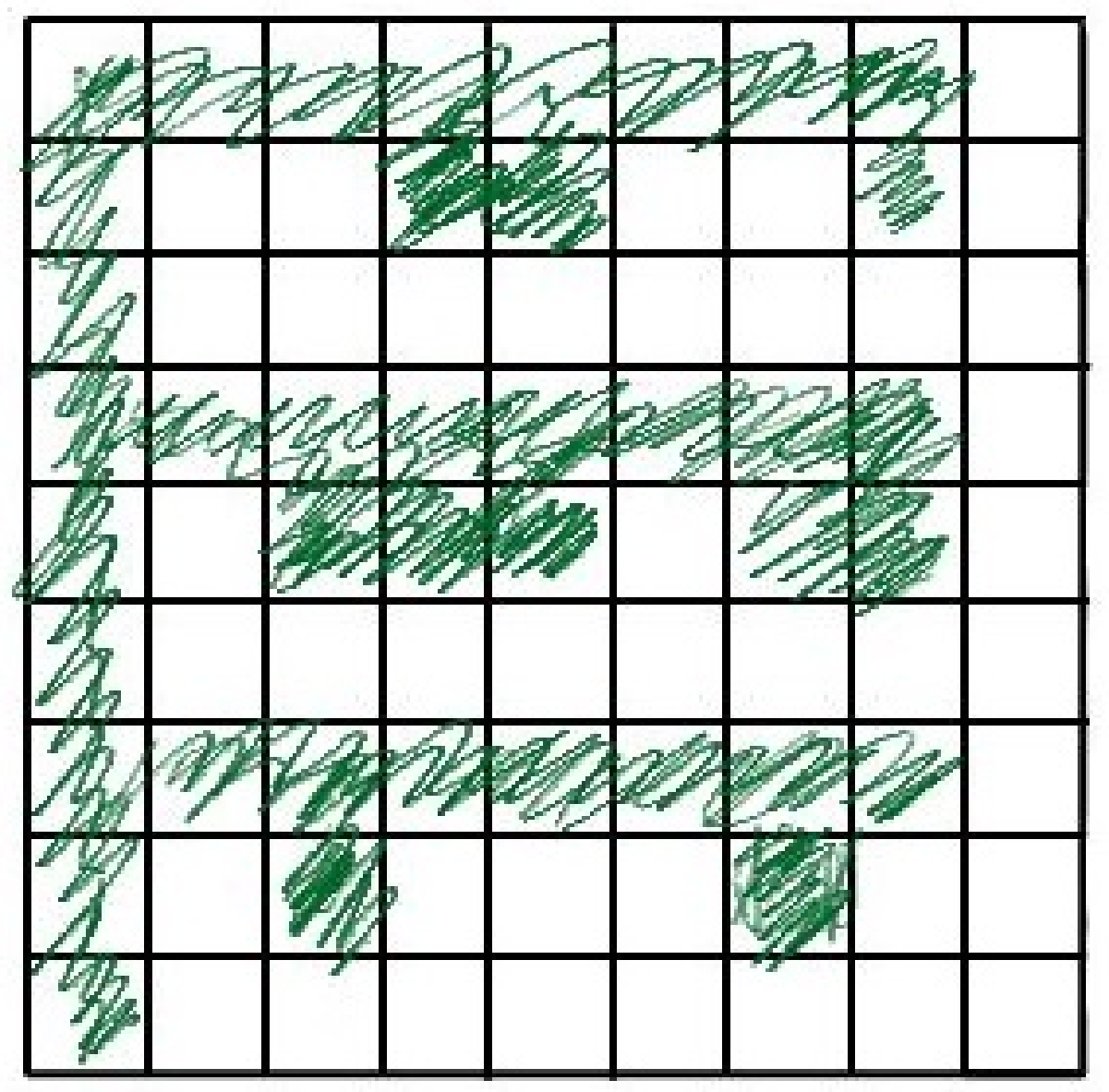}
			\caption[]{One possible point in the packing.}
				\label{fig:packingpic_oneconfig}
		\end{subfigure}
		\caption[]{}
\end{figure}

Note that each entry of our blobs has magnitude $1/n$.  Each of the $q:= n(n-2)/3$ question marks can take the value $1/n$ or $-1/n$.  Thus, our problem reduces to constructing a packing of the Hamming cube
\[\tilde{K}:= \frac{1}{n} \{+1, -1\}^q.\]
Such packings are well known.  We use a random construction.

We pack $\tilde{K}$ by randomly picking a number of points in it and showing that with nonzero probability each pairwise distance is at least 1/2.  Thus, let $\x$ and $\x'$ be two points picked uniformly at random from $\tilde{K}$.  Note that $\twonorm{n \cdot \x - n \cdot \x'}^2 \sim 4\cdot \text{Binomial}(q, 1/2)$.  It follows from Hoeffding's inequality that
\begin{eqnarray}
 \Pr{\twonorm{ \x -  \x'}^2 < \frac{q}{d}} = \Pr{\twonorm{n \cdot \x - n \cdot \x'}^2 < q} \leq  \exp(-q/8).
\end{eqnarray}
We conclude the construction of the packing by applying Lemma \ref{lem:random packing}.  In using the lemma, note that we assumed $d \geq 64$, and thus $q = n(n-2)/3\geq d/4$.
\end{proof}

\subsection{Gaussian $\OM$}

We now construct a packing when $\OM$ is Gaussian.

In this section we take $\OM \in \R^{p \times d}$ to have $\textit{i.i.d.}$ standard normal entries.
Note that with probability 1 the rows of $\OM$ are in general position and thus
\[K_b := \bigcup_{\abs{T} = p - (d - b)} K_T.\]

Let $K := K_1 \cap S^{d-1}$.  We construct a packing for $K$.

\begin{lemma}
\label{lem: pack Gaussian}
\[P(K, 1/2) \geq 
 \frac{1}{\sqrt{3}}  \exp\left(\frac{d-1}{8} \cdot \left(1 - \frac{d-2}{p} \right)\right).\]
\end{lemma}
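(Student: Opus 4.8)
The plan is to build the packing by a random construction and feed it into the random packing lemma (Lemma~\ref{lem:random packing}), just as in the finite-differences case. Since the rows of $\OM$ are almost surely in general position, a point of $K := K_1 \cap S^{d-1}$ is determined by choosing which $d-1$ rows the signal must be orthogonal to. I would therefore define a random point $\x \in K$ as follows: draw a uniformly random subset $S \subset \{1,\dots,p\}$ with $|S| = d-1$, let $\x$ be the (a.s.\ unique up to sign) unit vector orthogonal to all rows $\{\OM_i : i \in S\}$, and attach an independent uniform sign $\pm 1$. The first structural fact I would establish is that, by rotational invariance of the Gaussian ensemble, this $\x$ is uniformly distributed on $S^{d-1}$.

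Next I would take two such points $\x, \x'$, built from independent subsets $S, S'$ and independent signs, and estimate the collision probability $\bar\eta := \Pr{\twonorm{\x - \x'} < 1/2}$, where the probability is over $\OM$, the subsets, and the signs jointly. Writing $k := |S \cap S'|$, the heart of the argument is the observation that $\x$ and $\x'$ are both orthogonal to the $k$ shared rows, hence both lie in the $(d-k)$-dimensional subspace $W := (\operatorname{span}\{\OM_i : i \in S\cap S'\})^{\perp}$; conditioned on $W$, they are determined by two \emph{disjoint}, hence independent, collections of rows. By rotational invariance inside $W$, conditioned on $k$ the pair $(\x,\x')$ is distributed as a pair of independent uniform unit vectors in $\R^{d-k}$, so that $\langle \x, \x'\rangle$ has the law of the inner product of two independent uniform points of $S^{d-k-1}$.

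With this reduction in hand, the computation splits into two standard pieces. First, $\twonorm{\x-\x'} < 1/2$ is equivalent to $\langle \x, \x'\rangle > 7/8$, and a spherical cap estimate bounds the conditional collision probability by a quantity exponentially small in the effective dimension $d-k$. Second, I would control the overlap $k$, which is hypergeometric with mean $(d-1)^2/p$; bounding its moment generating function (e.g.\ via negative association of the sampling-without-replacement indicators) lets me average the conditional bound over $k$. Combining the two pieces should give $\bar\eta \le \exp\!\left(-\tfrac{d-1}{4}\bigl(1 - \tfrac{d-2}{p}\bigr)\right)$. Intuitively, when $p$ is large the two subsets are essentially disjoint and $\x,\x'$ are essentially independent uniform directions (yielding the clean $\exp(-(d-1)/4)$), while the factor $1-\tfrac{d-2}{p}$ is precisely the correction for the shared rows when $p$ is not large.

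Finally I would convert the collision bound into the packing bound. Applying Lemma~\ref{lem:random packing} conditionally on $\OM$ (where the sampled points are genuinely i.i.d.) gives $P(K,1/2) \ge \eta(\OM)^{-1/2}$ for $\eta(\OM) := \Pr{\twonorm{\x-\x'}<1/2 \mid \OM}$; since $\E_{\OM}\,\eta(\OM) = \bar\eta$, Markov's inequality shows $\eta(\OM) \le 3\bar\eta$ on an event of probability at least $2/3$, which delivers the stated bound $P(K,1/2) \ge \tfrac{1}{\sqrt 3}\exp\!\left(\tfrac{d-1}{8}(1-\tfrac{d-2}{p})\right)$ (the constant $3$, and hence the $1/\sqrt 3$, is exactly the Markov factor). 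I expect the main obstacle to be the overlap analysis: getting the conditional reduction to dimension $d-k$ correct and then combining the cap estimate with the moment generating function of $k$ tightly enough to produce the exact exponent — in particular the $(1-\tfrac{d-2}{p})$ factor rather than a cruder bound.
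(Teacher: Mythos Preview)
Your strategy is essentially the paper's: generate random points by choosing $d-1$ rows and a sign, use rotational invariance to see that conditionally on the overlap $k=|S\cap S'|$ the pair behaves like two independent uniform points on $S^{d-k-1}$, apply a spherical-cap bound, and feed the resulting collision probability into Lemma~\ref{lem:random packing}. The execution differs in two places. First, the paper does \emph{not} average the cap bound against the MGF of $k$; instead it proves a Hoeffding--Serfling tail bound for sampling without replacement (their Lemma~\ref{lem:overlap}) to isolate a high-probability ``good'' event on which $k$ is small, applies the cap bound on that event, and adds the failure probability. The resulting collision bound is $3\exp\!\bigl(-\tfrac{(d-1)(p-d+2)}{4p}\bigr)$, and the $1/\sqrt{3}$ in the statement is exactly the square root of that prefactor~$3$ --- not a Markov factor over $\OM$. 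Second, the paper applies Lemma~\ref{lem:random packing} directly to this averaged collision probability, without your explicit conditioning-plus-Markov step; your treatment is arguably more careful here, but note it only yields the stated bound with probability at least $2/3$ over $\OM$, not almost surely. One caution on your route: combining the cap bound with the hypergeometric MGF via negative association gives an exponent of the form $-\tfrac{d-k}{2}$ averaged against $\E e^{k/2}$, and the crude bound $\E e^{k/2}\le\bigl(1+\tfrac{d-1}{p}(e^{1/2}-1)\bigr)^{d-1}$ does not obviously reproduce the clean factor $\tfrac{d-1}{4}\bigl(1-\tfrac{d-2}{p}\bigr)$; the paper's split into good and bad events is precisely what manufactures that exponent, so if you want to match the stated constant you may find it easier to follow their route rather than the MGF one.
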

\begin{proof}
As above, we randomly construct our packing.  We will pick a series of points in $K$ as follows.  First, pick $d-1$ rows of $\OM$ uniformly at random indexed by the set $\Lambda \in [p]$.  Take $\x$ which satisfies $\OM_{\Lambda} = 0$.  This restricts $\x$ to a 1-dimensional space.  Since $\x$ must be on the unit sphere, there are only two possible points.  Pick one at random.

Take $\x$ and $\x'$ to be two randomly generated points; we must show that they are far apart with high probability. Let $\Lambda$ and $\Lambda'$ be the corresponding rows of $\OM$, so that $\OM_{\Lambda} \x = 0$ and $\OM_{\Lambda'} \x' = 0$.  Note that $\x$ and $\x'$ are both drawn uniformly from the sphere, but they are not independent because $\Lambda$ and $\Lambda'$ may have some intersection.  The proof will follow by controlling their dependence.  First, we note that $\Lambda$ and $\Lambda'$ do not have an overly large intersection (with high probability).

\begin{lemma}[Bounding $\abs{\Lambda \cap \Lambda'}$]
\label{lem:overlap}
\begin{eqnarray}
 \Pr{\abs{\Lambda \cap \Lambda'} \geq (d-1) \cdot \left(\frac{d + p}{2p}\right)} 
 \leq \exp\left(-\frac{(d-1)(p - d+2)}{2p}\right)
\end{eqnarray}
\end{lemma}
\begin{proof}
%First, for simplicity we may bound the probability by taking $\abs{\Lambda} = d = \abs{\Lambda'}$, thus replacing $d-1$ with $d$ in the calculations. 
 An application of Corollary 1.1 in \cite{Serfling74Probability}  (an extension of Hoeffding's inequality for sampling without replacement \cite{Hoeffding63Probability}) gives
 \begin{eqnarray}
\Pr{\abs{\Lambda \cap \Lambda'}  \geq \frac{(d-1)^2}{p} + t} 
\leq \exp\left(-\frac{2t^2}{(d-1)(1-\frac{d-2}{p})}\right).
\end{eqnarray}
The lemma follows by taking $t = (d-1)(p - d+2)/(2 p)$.
\end{proof}

Now condition on $\Lambda, \Lambda'$, and $\OM_{\Lambda \cap \Lambda'}$.  Both $\x$ and $\x'$ must satisfy $\OM_{\Lambda \cap \Lambda'} \x = 0$, thus reducing the dimension of the space they live in to $d - \abs{\Lambda \cap \Lambda'}$.  Set $T := \Lambda \setminus \Lambda'$ and $T' := \Lambda' \setminus \Lambda$.  Note that $\OM_T$ and $\OM_{T'}$ are independent.  Thus, by the rotational invariance of the Gaussian distribution, $\x$ and $\x'$ are distributed uniformly at random in the orthogonal complement to span($\OM_{\Lambda \cap \Lambda'}$).  The distance between $\x$ and $\x'$ is equal in distribution to the distance between two points chosen uniformly at random on $S^{q-1}$ where $q := d - \abs{\Lambda \cap \Lambda'}$.  Let $z := \twonorm{\x - \x'}$.  Note that the geodesic distance between $\x$ and $\x'$ is equal to $2 \arcsin(z)$.  The distribution of $z$ does not change if we fix $\x'$, and thus, the probability that $z \leq 1/2$ is precisely the normalized measure of a spherical cap with geodesic radius $2 \arcsin(1/2) \leq 0.52$.  Bounds for this quantity are well known (see \cite{Talagrand91Probability}[Theorem 1.1]) giving
\begin{eqnarray}
\Pr{ \twonorm{\x - \x'} \leq \frac{1}{2}} \leq 2 \exp\left(-\frac{q}{2}\right).
\end{eqnarray}

We only need to bound $q$, but this is done in Lemma \ref{lem:overlap}.  Let $E$ be the good event that $\abs{\Lambda \cap \Lambda'} \geq (d-1) \cdot \left(\frac{d + p}{2p}\right)$.  By Lemma \ref{lem:overlap}, $\Pr{E} \geq 1 - \exp\left(-\frac{(d-1)(p - d+2)}{2p}\right)$.  On the event $E$, we have $q \geq d - (d-1) \cdot \left(\frac{d + p}{2p}\right) = 1+ (d-1) \cdot \left(\frac{p - d}{2p}\right)$.

Putting these pieces together, we get
\begin{eqnarray*}
\Pr{\twonorm{\x - \x'} \leq \frac{1}{2}} 
&\leq & \Pr{\twonorm{\x - \x'} \leq \frac{1}{2} \mid E} + \Pr{E^c}\\
& \leq & 2 \exp\left(-\frac{1}{2} - \frac{(d-1)(p-d)}{4p}\right)  + \exp\left(-\frac{(d-1)(p - d+2)}{2p}\right)\\  &\leq& 3  \exp\left(-\frac{(d-1)(p - d+2)}{4p}\right),
\end{eqnarray*}
where in the last line we have used the fact that $\frac{2(d-1)}{4p} \le \frac{1}{2}$.

Complete the proof by applying Lemma \ref{lem:random packing} with $\delta = 1/2$ and $\eta$ as in the last line of the above equation.
\end{proof}

\subsection{Implications of Set Packing}

Consider a vector $\x$ which is known to reside in a set $K$.  We show that if $K$ admits a large packing, then $\x$ cannot be robustly reconstructed from few linear measurements by any method.  The proof proceeds by showing that the distance between some pair of points in the packing will be reduced immensely when subsampling, and thus the corresponding two points in $K$ will be nearly indistinguishable amid noise.

We need the following lemma in the proof.  This lemma is a classical result about packing numbers. In this lemma and the ones to follow, we denote by $B^n$ \rg{($\subset \R^{n}$)} the $\ell_2$ ball of radius 1 centered at the origin.

\begin{lemma}[Minimum distance in a packing]
\label{lem:packing_dist}
Let $\X \subset B^m$ be a finite set of points.  Then
\[\min_{\stackrel{x \neq y}{x,y \in \X}} \twonorm{\x - \y} \leq \frac{4}{\abs{\X}^{1/m}}.\]

\end{lemma}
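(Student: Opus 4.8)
The plan is to prove this via a volume-packing argument, which is the standard approach for bounding the minimum distance in a packing. Let $\X \subset B^m$ be our finite set with minimum pairwise distance $\delta := \min_{x \neq y} \twonorm{\x - \y}$. The key geometric idea is that if all points are separated by distance at least $\delta$, then balls of radius $\delta/2$ centered at each point are disjoint, and these disjoint balls must fit inside a slightly enlarged version of $B^m$.

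First I would formalize the disjointness. Since any two distinct points $\x, \y \in \X$ satisfy $\twonorm{\x - \y} \geq \delta$, the open balls $\x + (\delta/2) B^m$ are pairwise disjoint (two points in the same intersection would be within distance less than $\delta$ of each other). Next I would observe the containment: each such ball is centered at a point of $B^m$ and has radius $\delta/2$, so by the triangle inequality every point in $\x + (\delta/2)B^m$ lies within distance $1 + \delta/2$ of the origin, i.e.
\[
\bigcup_{\x \in \X} \left( \x + \frac{\delta}{2} B^m \right) \subseteq \left(1 + \frac{\delta}{2}\right) B^m.
\]

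Then I would compare volumes. Since the balls are disjoint and contained in the enlarged ball, summing volumes gives
\[
\abs{\X} \cdot \left(\frac{\delta}{2}\right)^m \vol(B^m) \leq \left(1 + \frac{\delta}{2}\right)^m \vol(B^m),
\]
where the factor $\vol(B^m)$ cancels because volume scales as the $m$-th power of the radius. This yields $\abs{\X}^{1/m} \cdot (\delta/2) \leq 1 + \delta/2$, and hence $\delta \leq 2/(\abs{\X}^{1/m} - 1)$. To reach the cleaner stated bound $\delta \leq 4/\abs{\X}^{1/m}$, I would handle the $-1$ in the denominator: if $\abs{\X}^{1/m} \geq 2$ then $\abs{\X}^{1/m} - 1 \geq \abs{\X}^{1/m}/2$, giving the result directly; if $\abs{\X}^{1/m} < 2$ then $4/\abs{\X}^{1/m} > 2 \geq \diam(B^m)$, so the bound holds trivially since any two points in $B^m$ are within distance $2$.

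The main obstacle, though minor, is getting the constant right and cleanly disposing of the $-1$ in the denominator; the volume comparison itself is routine once the disjointness and containment are set up. The essential content is purely the two geometric facts (disjoint small balls, contained in an enlarged ball) plus the scaling of Lebesgue measure under dilation, so no deep machinery is required.
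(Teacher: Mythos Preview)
Your argument is correct and is precisely the ``simple and classical volumetric argument'' the paper invokes (with a reference rather than a written-out proof): disjoint balls of radius $\delta/2$ around the packing points fit inside $(1+\delta/2)B^m$, and comparing volumes yields the bound after a short case split on whether $\abs{\X}^{1/m}\ge 2$.
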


The proof is a simple and classical volumetric argument, see \cite{pisier1999volume}.     
%It is simple so we include it to keep the paper self contained.
%\begin{proof}
%Set
%\[\eps := \frac{1}{2} \min_{\stackrel{x \neq y}{x,y \in \X}} \twonorm{\x - \y}.\]
%and let $\X_{\eps}$ be the expansion of $\X$ be $\eps$, i.e.,
%\[\X_{\eps} := \bigcup_{\x \in \X} B_{\eps}^m(\x).\]
%Note that by definition of $\eps$, the interiors of the balls in the above union are disjoint (although they may touch at the boundary).  Thus,
%\[\vol(\X_{\eps}) \leq \sum_{\x \in \X} \vol(B_{\eps}^m(\x)) = \abs{\X} \eps^m \cdot \vol(B^m).\]
%On the other hand, $\X_{\eps} \subset (1 + \eps) B^m$.  It follows that $\vol(\X_{\eps}) \leq (1 + \eps)^m \cdot Vol(B^m) \leq 2^m \cdot Vol(B^m)$.  Combine this with the above equation to give
%\[\abs{\X} \eps^m \cdot \vol(B^m) \leq  2^m \cdot Vol(B^m).\]
%The lemma follows by massaging the above equation.
%\end{proof}

The following lemma begins to address the problem of signal estimation.
\begin{lemma}
\label{lem:fails on packing}
Let $\X \subset B^d$ be a finite set of points.  Suppose $\y = \A \x + \z$ for some $\x \in \X$, $\A \in \R^{m \times d}$ with $\opnorm{\A} \leq 1$, and $\z \sim \NN(0, \sigma^2\cdot \Id)$.  Assume
\[\frac{4}{\abs{\X}^{1/m}} \leq \sigma.\]
Let $\hat{\x} = \hat{\x}(\y)$ be any estimator of $\x$.  Then,
\begin{equation}
\label{eq:minimax_prob}
\min_{\x \in \X} \Pr{\hat{\x} = \x} \leq \frac{3}{4}.
\end{equation}
\end{lemma}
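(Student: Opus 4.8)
The plan is to reduce estimating $\x$ to a two-point hypothesis test and to show the two hypotheses are statistically nearly indistinguishable under the Gaussian noise. The whole argument rests on producing a single pair of points in $\X$ whose measurement vectors almost coincide, and then arguing that no estimator can reliably separate them amid noise.

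\textbf{Finding a hard pair.} Since $\opnorm{\A} \le 1$ and $\X \subset B^d$, every $\x \in \X$ satisfies $\twonorm{\A \x} \le \opnorm{\A}\,\twonorm{\x} \le 1$, so $\A\X \subset B^m$. If $\A$ is injective on $\X$ then $\abs{\A\X} = \abs{\X}$, and applying Lemma~\ref{lem:packing_dist} to the set $\A\X \subset B^m$ produces distinct $\x_0, \x_1 \in \X$ with $\twonorm{\A\x_0 - \A\x_1} \le 4/\abs{\X}^{1/m} \le \sigma$, the last step being the standing hypothesis. If $\A$ is not injective on $\X$, then some distinct pair already satisfies $\A\x_0 = \A\x_1$, so trivially $\twonorm{\A\x_0 - \A\x_1} = 0 \le \sigma$. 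In either case I obtain two distinct signals $\x_0, \x_1 \in \X$ whose measurement vectors lie within distance $\sigma$.

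\textbf{The two-point test.} Conditioned on $\x = \x_i$, the observation is $\y \sim \NN(\A\x_i, \sigma^2 \Id)$ for $i \in \{0,1\}$; the two laws share the covariance $\sigma^2\Id$ and differ only in their means $\A\x_0, \A\x_1$, which are within distance $\sigma$. I would bound their total variation distance by combining the Gaussian formula $\mathrm{KL}\!\left(\NN(\mu_0,\sigma^2\Id)\,\Vert\,\NN(\mu_1,\sigma^2\Id)\right) = \twonorm{\mu_0-\mu_1}^2/(2\sigma^2) \le 1/2$ with Pinsker's inequality, giving a total variation distance of at most $\sqrt{1/4} = 1/2$. (Alternatively one can invoke the exact formula for the TV distance between two equal-covariance Gaussians.)

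\textbf{Conclusion.} Write $p_i := \Pr{\hat{\x} = \x_i \mid \x = \x_i}$. The events $\{\hat{\x} = \x_0\}$ and $\{\hat{\x} = \x_1\}$ are disjoint, so under $\x = \x_1$ the event $\{\hat{\x} = \x_0\}$ has probability at most $1 - p_1$; comparing the probability of $\{\hat{\x} = \x_0\}$ under the two hypotheses and bounding the gap by the total variation distance yields $p_0 - (1-p_1) \le 1/2$. Hence $p_0 + p_1 \le 3/2$ and $\min(p_0,p_1) \le 3/4$, and since $\min_{\x \in \X}\Pr{\hat{\x} = \x} \le \min(p_0,p_1)$ this is exactly \eqref{eq:minimax_prob}. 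The routine ingredients are the contraction estimate $\A B^d \subset B^m$ and the Gaussian KL computation. The one point needing genuine care is the first step: Lemma~\ref{lem:packing_dist} bounds the minimum distance of $\A\X$ in terms of $\abs{\A\X}$, not $\abs{\X}$, so if $\A$ collapses points of $\X$ the direct bound degrades — the split on injectivity (collisions contributing distance exactly $0$) is what repairs this. I would also note the implicit assumption $\abs{\X} \ge 2$, without which the two-point reduction, and the statement itself, is vacuous.
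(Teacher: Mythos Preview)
Your proof is correct and follows the same strategy as the paper: apply Lemma~\ref{lem:packing_dist} to $\A\X \subset B^m$ to find a pair with $\twonorm{\A\x_0 - \A\x_1} \le \sigma$, then show that no estimator can reliably separate the two resulting Gaussian hypotheses. The only differences are in the execution of the second step: the paper identifies the Bayes (nearest-neighbor) rule explicitly and computes its success probability as $\Pr{\NN(0,1) \le \eps/(2\sigma)} \le \Phi(1/2) \le 3/4$, whereas you bound the total variation distance by $1/2$ via the Gaussian KL formula and Pinsker, then deduce $p_0 + p_1 \le 3/2$; both routes land on the same constant. Your case split on whether $\A$ is injective on $\X$ is a piece of hygiene the paper silently skips.
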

\begin{proof}
We will lower-bound the worst-case probability of error by the probability of error under a suitably unfavorable prior. (This reduction from minimax to Bayesian is a standard trick, see, for example, \cite[Equation 2]{guntuboyina}). 

First, apply the packing bound (Lemma \ref{lem:packing_dist}) to $\A \X$ to show that there are some two points $\A \x_1, \A \x_2 \in \A \X$ satisfying
\[\twonorm{\A \x_1 - \A \x_2} := \eps \leq \frac{4}{\abs{\X}^{1/m}} \leq \sigma. \]
Consider the prior distribution which picks $\x_1$ with probability $1/2$ and $\x_2$ with probability $1/2$.  For any given estimator, the worst-case probability of error (the left-hand side of Equation \eqref{eq:minimax_prob}) is lower-bounded by the probability of error under this prior.  This is further minimized by the \textit{Bayes Estimator} which chooses $\x_1$ or $\x_2$ based on which has the highest posterior probability conditional on $\y$.  The Bayes estimator simply takes
\[\hat{\x} = \arg\min_{\x_1, \x_2} \twonorm{\A \x - \y}.\]
It is straightforward to show that this estimator \rg{satisfies},
\begin{eqnarray}
\Pr{\hat{\x} = \x}  =  \Pr{\NN(0,1) \leq   \frac{\eps}{2 \sigma}}  
 \leq  \Pr{\NN(0,1) \leq \frac{1}{2}} \leq \frac{3}{4}.
\end{eqnarray}
\end{proof}

The following proposition is the synthesized tool relating packings to minimax error that we will use to prove our main theorems.
\begin{proposition}
\label{prop:main proposition}
Let $K \subset \R^d$ be a cone.  Let $\X$ be a $\delta$-packing of $K \cap B^d$.  Suppose $\y = \A \x + \z$ for $\x \in K$, $\A \in \R^{m \times d}$ with $\opnorm{\A} \leq 1$, and $\z \sim \NN(0, \sigma^2\cdot \Id)$.  Then for any estimator $\hat{\x} = \hat{\x}(\y)$, we have
\[\sup_{\x \in K} \E \twonorm{\hat{\x} - \x} \geq \frac{\delta \sigma \abs{\X}^{1/m}}{32} .\]
\end{proposition}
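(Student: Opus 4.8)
The plan is to reduce the problem to a two-point hypothesis test by exploiting the cone structure of $K$ to rescale the packing, and then to convert the resulting probability-of-error bound into a bound on the expected $\ell_2$ error via Markov's inequality. The first step is to locate a pair of packing points whose images under $\A$ are close together. Since $\X \subset K \cap B^d$ and $\opnorm{\A} \leq 1$, the image set $\A \X$ lies in $B^m$, so Lemma \ref{lem:packing_dist} supplies two distinct points $\x_1, \x_2 \in \X$ with $\twonorm{\A \x_1 - \A \x_2} \leq 4 / \abs{\X}^{1/m}$, while the packing property guarantees $\twonorm{\x_1 - \x_2} \geq \delta$.

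Next I would rescale to balance this measurement-space distance against the noise level $\sigma$. Set $R := \sigma \abs{\X}^{1/m} / 4$; because $K$ is a cone, the rescaled candidates $R\x_1, R\x_2$ still lie in $K$, and they satisfy $\twonorm{\A(R\x_1) - \A(R\x_2)} \leq \sigma$ while $\twonorm{R\x_1 - R\x_2} \geq R\delta$. Thus in measurement space the two hypotheses are at most $\sigma$ apart, placing us exactly in the regime where Gaussian noise makes them genuinely hard to distinguish, yet in signal space they remain separated by $R\delta$.

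I would then run the two-point argument. Put the prior placing mass $1/2$ on each of $R\x_1, R\x_2$. The Bayes-test computation carried out in the proof of Lemma \ref{lem:fails on packing}, applied to the two Gaussian observation laws whose means differ by at most $\sigma$ in $\A$-image, shows that the nearest-point test $\hat{T}$ induced by any estimator $\hat{\x}$ is correct with probability at most $3/4$ under this prior, hence errs with probability at least $1/4$. Since $R\x_1$ and $R\x_2$ are $R\delta$-separated, whenever $\twonorm{\hat{\x} - \x} < R\delta/2$ the induced test is necessarily correct; therefore $\hat{\x}$ lands within $R\delta/2$ of the truth with probability at most $3/4$, so $\twonorm{\hat{\x} - \x} \geq R\delta/2$ holds with probability at least $1/4$ under the prior. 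Markov's inequality then yields, under the same prior, an expected error of at least $(R\delta/2) \cdot (1/4) = R\delta/8$, and since the supremum over $\x \in K$ dominates any prior average, $\sup_{\x \in K} \E \twonorm{\hat{\x} - \x} \geq R\delta/8 = \sigma \delta \abs{\X}^{1/m}/32$, as claimed.

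The main obstacle is the rescaling step: one must invoke the cone hypothesis to push the packing out to radius $R \sim \sigma \abs{\X}^{1/m}$ so that the $\A$-images of the chosen pair sit precisely at distance $\sim \sigma$, and choosing this $R$ correctly is exactly what produces the $\abs{\X}^{1/m}$ factor in the final bound. Everything afterward — the passage from a testing error to an $\ell_2$ error through Markov, together with the observation that an accurate estimator would solve the test — is routine.
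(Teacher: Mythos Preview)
Your proof is correct and follows essentially the same approach as the paper: rescale via the cone property so that the chosen pair of packing points has $\A$-images at distance at most $\sigma$, invoke the two-point Bayes test from Lemma~\ref{lem:fails on packing}, and convert the resulting $1/4$ probability of error into the expected-loss bound via Markov. The only cosmetic difference is that the paper scales the noise down by $\lambda = 4/(\sigma\abs{\X}^{1/m})$ and then applies Lemma~\ref{lem:fails on packing} to the whole packing $\X$, whereas you equivalently scale the signals up by $R = 1/\lambda$ and unpack the two-point argument directly.
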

\begin{proof}
We begin by rescaling the problem so that the noise level is just large enough that a signal in $\X$ will be hard to recover, i.e., so that we may use Lemma \ref{lem:fails on packing}.  Let
\[\lambda := \frac{4}{\abs{\X}^{1/m} \sigma}\]
and set $\tilde{\y} = \lambda \y, \tilde{\x} = \lambda \x,$ and $\tilde{\z} = \lambda \z$.  Thus, $\tilde{\y} = \A \tilde{\x} + \tilde{\z}$ and $\tilde{\z}, \X$ satisfy the conditions of Lemma \ref{lem:fails on packing}.  Note also that $\tilde{\x} \in \lambda K = K$.  We further restrict $\tilde{\x}$ to lie in $\X$.

Now, by Lemma \ref{lem:fails on packing}, for any estimator $\hat{\x} = \hat{\x}(\y)$,
\[\min_{\tilde{\x} \in \X}\Pr{\hat{\x}(\y) = \tilde{\x}} \leq 3/4.\]
Since no estimator can reliably guess $\tilde{\x}$ on a $\delta$-packing, it follows that no estimator can estimate $\tilde{\x}$ to accuracy better than $\delta/2$ with high probability.  Otherwise, such an estimator could be projected onto $\X$ to make a reliable guess.  In other words, 
\[\min_{\tilde{\x} \in \X} \Pr{\twonorm{\hat{\x}(\y) - \tilde{\x}} < \delta/2} \leq 3/4.\]
This implies that 
\[\max_{\tilde{\x} \in \X} \E \twonorm{\hat{\x} - \tilde{\x}} \geq \frac{\delta}{8} .\]
Divide both sides of the equation by $\lambda$ to undo the scaling and finish the proof.
\end{proof}

\begin{figure}[t]
  \centering
 \centerline{\includegraphics[width=10cm]{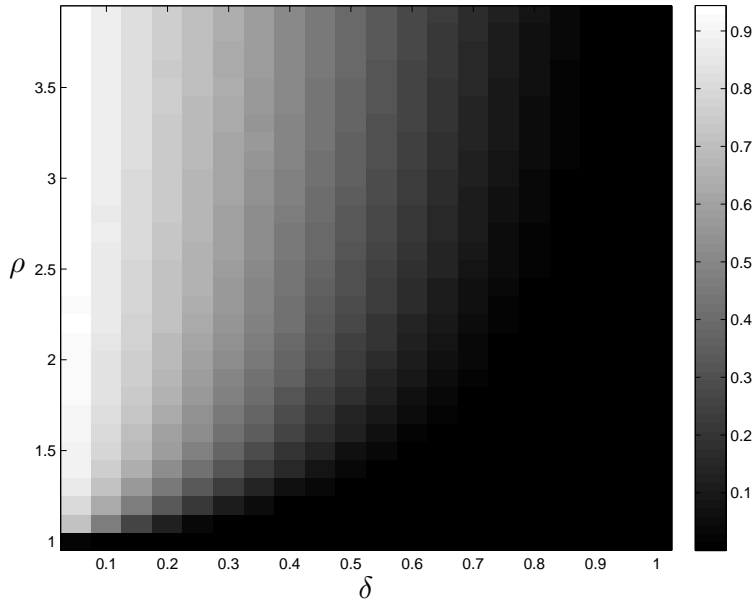}}
\caption{Recovery error of $\ell_1$-minimization with Gaussian analysis dictionary in the noiseless case. Experiment setup: $d=200$, $\delta = \frac{m}{d}$ and $\rho = \frac{p}{d}$. For each configuration we average over $500$ realizations.
Color attribute: Mean Squared Error.
  }
\label{fig:l1_gauss_recovery}
\end{figure}

\begin{remark}[Interpretation in terms of complexity, or metric dimension, of $K$]
While the manifold dimension of a signal set determines signal recoverability in the noiseless case \cite{Eldar12Uniqueness}, it can fail to characterize the noisy case.  Instead, a classical metric notion of dimension, following ideas of Kolmogorov and Le Cam, provides a more apt characterization.  Indeed, set $D(K) := \log(P(K\cap B^d, 1/2))$.  Le Cam \cite{le1986asymptotic} showed that $D(K)$ is a effective metric characterisation of the complexity (or dimension) of the set $K$ in regards to many point estimation problems.  As a simple example, if $K$ is a $q$-dimensional subspace it is well-known that $D(K)$ is proportional to $q$, just like the manifold dimension.  However, in contrast to the manifold dimension, this metric definition takes into account the geometry of the set, thus allowing characterization of signal recoverability amid noise.  The above proposition states that signal recovery error is at least proportional to $\exp(D(K)/m)$, i.e., if the number of measurements is below the effective dimension, the error amid noise blows up exponentially fast as a function of the ratio. 
\end{remark}
\subsection{Putting it together}
\label{sec:finals steps}
Theorem \ref{thm:TV_main_theorem} now follows by combining the packing number of Lemma \ref{lem:pack finite differences} with Proposition \ref{prop:main proposition}.  Theorem \ref{thm:Gaussian_main_theorem} follows by combing Lemma \ref{lem: pack Gaussian} with Proposition \ref{prop:main proposition}.

\begin{figure}[t]
  \centering
 \centerline{\includegraphics[width=10cm]{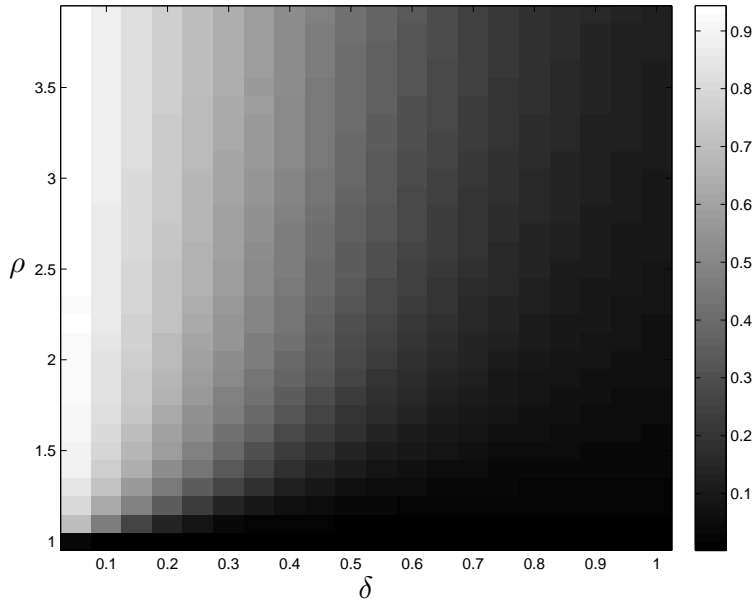}}
\caption{Recovery error of $\ell_1$-minimization with Gaussian analysis dictionary in the noisy case with $\sigma = 0.01$. Experiment setup: $d=200$, $\delta = \frac{m}{d}$ and $\rho = \frac{p}{d}$. For each configuration we average over $500$ realizations.
Color attribute: Mean Squared Error.}
\label{fig:l1_gauss_denoising}
\end{figure}

\section{Experiments}
\label{sec:exp}

To demonstrate the results of the theorems, in this section we look at the performance of analysis $\ell_1$-minimization 
\begin{eqnarray}
\label{eq:analysis_l1_noiseless}
\min_{\x'}\onenorm{\OM\x'} & s.t.& \twonorm{\y - \A\x'} \le \sqrt{m}\sigma,
\end{eqnarray}
in recovering signals with low dimensionality and different cosparsity levels. In all the experiments the measurement matrix $\A$ is a random Gaussian matrix with normalized columns. 

In the first experiment, we select $\OM$ to be a random Gaussian matrix and the signal $\x$ to be a Gaussian random vector projected to a one dimensional subspace orthogonal to randomly selected $d-1$ rows from $\OM$.  
In Fig.~\ref{fig:l1_gauss_recovery} we present the recovery performance in the noiseless case for a fixed signal ambient dimension $d=200$ and 
 different combinations of the sampling rate $\delta = \frac{m}{d}$ and the redundancy ratio $\rho = \frac{p}{d}$.  Interestingly, \rg{we observe empirically} that the theoretical instability to noise (Theorem \ref{thm:Gaussian_main_theorem}) also implies instability to $\ell_1$ relaxation. 
Indeed, notice that though the manifold dimension of the signal is equal to 1 in all the experiments, the success in recovery heavily depends on $p$, which changes only the cosparsity of the signal. As expected from the theory, as soon as $p$ increases to be slightly larger than $d$, the number of measurements needed to reconstruct the signal increases enormously.

In Fig~\ref{fig:l1_gauss_denoising} we present the reconstruction error in the noisy case when an additive random white Gaussian noise with standard deviation $\sigma = 0.01$ is added to the measurements. 
Notice that the error is saturated by $1$, the signal energy, as when the noise is very large the best estimator is the zero estimator, for which the error equals to the signal energy.

\begin{figure}[t]
  \centering
  \centerline{\includegraphics[width=10cm]{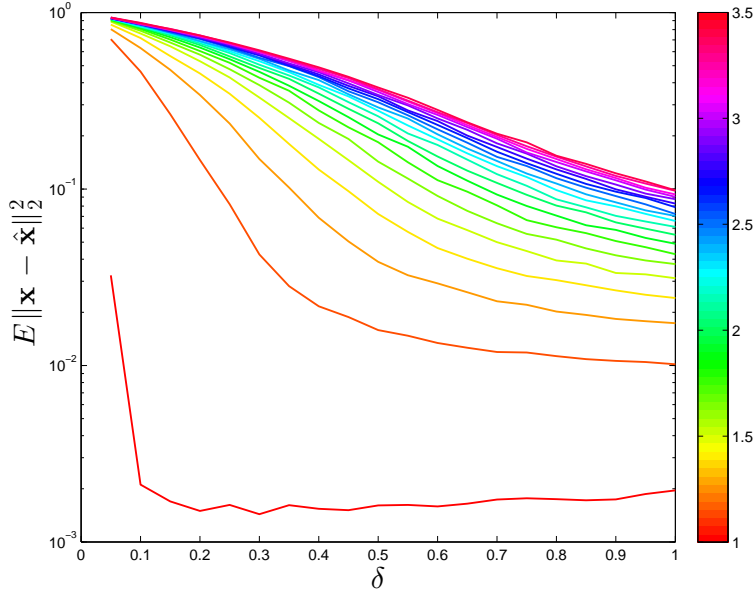}}
\caption{Recovery error of $\ell_1$-minimization with Gaussian analysis dictionary in the noisy case with $\sigma = 0.01$ as a function of $\delta = \frac{m}{d}$ for different selections of $\rho = \frac{p}{d}$. The signal dimension is $d=200$ and we average over $500$ realizations.
Color attribute: The color of each graph corresponds to the value of $\rho$. The bottom graph corresponds to $\rho = 1$ and the upper one to $\rho = 3.5$.}
\label{fig:l1_gauss_denoising_vs_m}
\end{figure}

%\yp{I think that is should be exponential in $1/m$, but the figure doesn't quite seem to show that.  Is that right?  Also, the last 3 sentences of the paragraph are a bit confusing.  Maybe they could be removed?}  
As predicted from the theorem, when $\rho > 1$
the recovery error grows exponentially as $m$, the number of measurements, decreases. This becomes clearer for larger values of $\rho$.  
To show it more clearly we present in Fig.~\ref{fig:l1_gauss_denoising_vs_m} the recovery error as a function of the sampling rate, which is a function of $m$. The bottom graph corresponds to $\rho = 1$, where the manifold dimension equals $d$ minus the cosparsity, for which we get a good recovery almost for all values of $m$. This is not surprising because when $d = m$, $\Omega$ is invertible and the analysis cosparse model may be recast into the standard synthesis model, in which one expects $O(log (p))$ measurements to suffice.  Indeed, we see that already when we have ten percent of the measurements (corresponds to $20 > 2\log(p) \simeq 10.6$ measurements) we get a very good recovery. However, the behavior is quite different as soon as $\rho$ increases to slightly greater than 1.  The rest of the graphs are above it and ordered according to increasing values of $\rho$. As $\rho$ becomes larger the error increases and its behavior as a function of $m$ becomes more and more exponential. 
%We remind the reader that in the standard compressed sensing model, with sparsity equal to 1, we would need only $2 \log(p)$ measurements to give stable recovery. In the analysis cosparse model, with manifold dimension equal to 1, the figure shows that we need far more measurements unless $p=d$; however, when $p = d$ cosparsity and manifold dimension coincide. Indeed, we see that already when we have ten percent of the measurements (corresponds to $20 > 2\log(p) \simeq 10.6$ measurements) we get a very good recovery. 

\begin{figure}[t]
  \centering
 \centerline{\includegraphics[width=10cm]{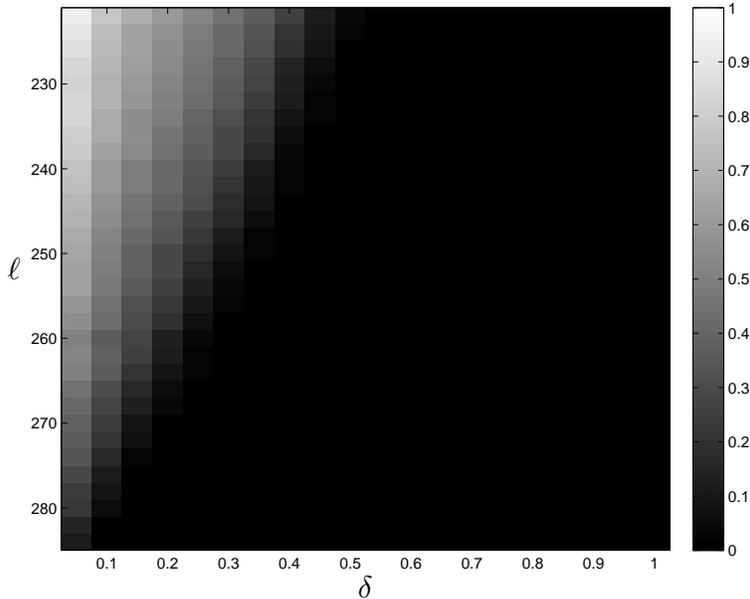}}
\caption{Recovery error of $\ell_1$-minimization with the 2D-DIF analysis dictionary in the noiseless case. Experiment setup: $d=144$, $\delta = \frac{m}{d}$ and $\cosp$ is the cosparsity level. For each configuration we average over $500$ realizations.
Color attribute: Mean Squared Error.
  }
\label{fig:l1_TV_recovery}
\end{figure}

In the second experiment we consider the 2D-DIF operator.
Notice that for this operator we may have  different cosparsity levels for the same manifold dimension. 
We consider vectors of size $d =144$ that represent two dimensional images of size $12 \times 12$ with manifold dimension two.
An image of dimension two is an image with two connected components, each with a different gray value. We generate randomly such images with different cosparsity levels. The values in the first and second connected components are selected randomly from the ranges $[0,1]$ and $[-1,0]$ respectively. Note that the cosparsity level defines the length of the edge in the image. The images are generated by setting all the pixels in the image to a value from the range $[0,1]$ and then picking one pixel at random and starting a random walk (from its location) that assigns to all the pixels in its path a value from the range $[-1, 0]$ (the same value). The random walk stops once it gets to a pixel that it has visited before. With high probability the resulted image will be with only two connected components. We generate many images like that and sort them according to their cosparsity (eliminating images that have more than two connected components). Note that the larger the cosparsity the shorter the edge.

We test the reconstruction performance for different cosparsity levels and sampling rates. 
The recovery error in the noiseless and noisy cases are presented in Figs.~\ref{fig:l1_TV_recovery} and \ref{fig:l1_TV_denoising}.
It can be clearly seen that the recovery performance in both cases is determined by the cosparsity level and not the manifold dimension of the signal which is fixed in all the experiments. 
As predicted by Theorem~\ref{thm:TV_main_theorem}, if we rely only on the manifold dimension we get a very unstable recovery. However, if we take into account also the cosparsity level we can have a better prediction of our success rate.
As we have seen in the previous experiment, also here we can see that instability in the noisy case also implies instability to relaxation.

\begin{figure}[t]
  \centering
 \centerline{\includegraphics[width=10cm]{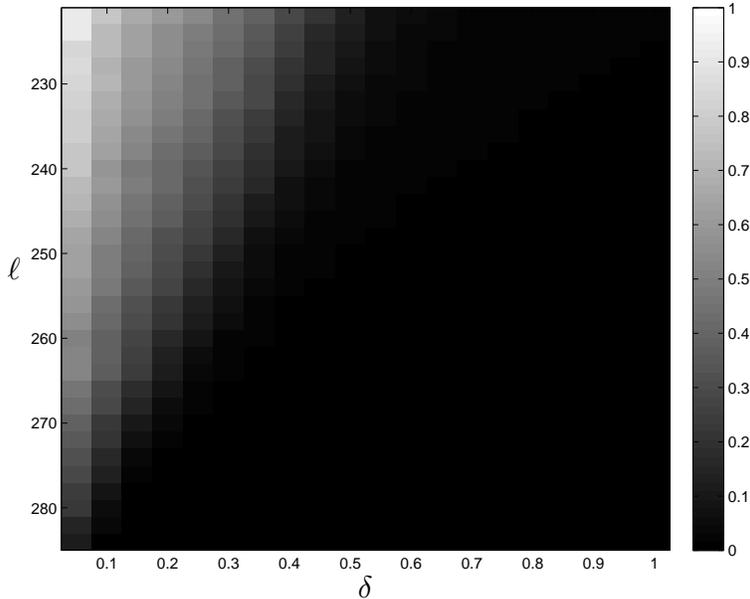}}
\caption{Recovery error of $\ell_1$-minimization with the 2D-DIF dictionary in the noisy case with $\sigma = 0.01$. Experiment setup: $d=144$, $\delta = \frac{m}{d}$ and $\cosp$ is the cosparsity level. For each configuration we average over $500$ realizations.
Color attribute: Mean Squared Error.}
\label{fig:l1_TV_denoising}
\end{figure}

\section{Conclusion}
\label{sec:conc}

In this work we have inquired whether it is possible to provide recovery guarantees for compressed sensing with signals from the cosparse analysis framework by only having information about their manifold dimension. Though the answer for this question is positive for standard compressed sensing (with the standard sparsity model) and the matrix completion problem, we have shown that this is not the case here. We have demonstrated this with two analysis dictionaries, the Gaussian matrix and the 2D-DIF operator, both in theory and simulations. Our conclusion is that in the cosparse analysis framework the ``correct" measure to use for predicting the recovery success of any tractable method is the cosparsity of the signal (number of zeros) and not the manifold dimension.

\rg{It would be interesting to check whether it is possible to carry over the results in this paper to non-subspace models such as curved manifolds. Notice that some theoretical guarantees have  already been given for the latter case \cite{baraniuk2009random, wakin2010manifold}, showing that it is possible to recover a $b$-dimensional submanifold from a number of samples proportional to $b$. However, these results also depend on a number of other quantities, such as the volume and condition number of the manifold. Therefore, another direction would be to see whether adding similar assumptions to the analysis model would allow sampling in the manifold dimension. }

\rg{An additional open question raised by this work is whether instability in the noisy case leads to instability to relaxation. Indeed, we have observed this phenomenon empirically in the experiments and therefore believe  
that there is a room to prove such a result.}

%\yp{I think it would be okay to remove the following paragraph, since it is somewhat answered by remark above on the metric version of dimension, and since the above paragraph is a nice way to finish.}  An interesting question we find following this work is what modifications should be done for the set $K_{d-r}$ such that it will be possible to recover it using number of measurements proportional to the manifold dimension.
%Another core question is what is peculiar to the cosparse analysis framework that places this limit that does not exists in the other low dimensional frameworks we have mentioned.  One avenue to explore for this model is the squared mean width of $K_{d-r} \cap S^{d-1}$, which gives a robust measure of dimension \cite{plan2014high}, and also gives an answer much higher than the manifold dimension for this problem.

\section*{Acknowledgment}
The authors thank the reviewers of the manuscript for their suggestions which improved the paper.

\bibliographystyle{IEEEtran}
\bibliography{negative, IEEEabrv}

% Generated by IEEEtran.bst, version: 1.12 (2007/01/11)
\begin{thebibliography}{10}
\providecommand{\url}[1]{#1}
\csname url@samestyle\endcsname
\providecommand{\newblock}{\relax}
\providecommand{\bibinfo}[2]{#2}
\providecommand{\BIBentrySTDinterwordspacing}{\spaceskip=0pt\relax}
\providecommand{\BIBentryALTinterwordstretchfactor}{4}
\providecommand{\BIBentryALTinterwordspacing}{\spaceskip=\fontdimen2\font plus
\BIBentryALTinterwordstretchfactor\fontdimen3\font minus
  \fontdimen4\font\relax}
\providecommand{\BIBforeignlanguage}[2]{{%
\expandafter\ifx\csname l@#1\endcsname\relax
\typeout{** WARNING: IEEEtran.bst: No hyphenation pattern has been}%
\typeout{** loaded for the language `#1'. Using the pattern for}%
\typeout{** the default language instead.}%
\else
\language=\csname l@#1\endcsname
\fi
#2}}
\providecommand{\BIBdecl}{\relax}
\BIBdecl

\bibitem{Bruckstein09From}
A.~M. Bruckstein, D.~L. Donoho, and M.~Elad, ``From sparse solutions of systems
  of equations to sparse modeling of signals and images,'' \emph{SIAM Review},
  vol.~51, no.~1, pp. 34--81, 2009.

\bibitem{Romberg08Imaging}
J.~Romberg, ``Imaging via compressive sampling,'' \emph{IEEE Signal Processing
  Magazine}, vol.~25, no.~2, pp. 14--20, March 2008.

\bibitem{Willett11Compressed}
R.~M. Willett, R.~F. Marcia, and J.~M. Nichols, ``Compressed sensing for
  practical optical imaging systems: a tutorial,'' \emph{Optical Engineering},
  vol.~50, no.~7, pp. 072\,601--072\,601--13, 2011.

\bibitem{Elad10Sparse}
M.~Elad, \emph{Sparse and Redundant Representations: From Theory to
  Applications in Signal and Image Processing}, 1st~ed.\hskip 1em plus 0.5em
  minus 0.4em\relax Springer Publishing Company, Incorporated, 2010.

\bibitem{Lu08Theory}
Y.~Lu and M.~Do, ``A theory for sampling signals from a union of subspaces,''
  \emph{IEEE Trans. Signal Process.}, vol.~56, no.~6, pp. 2334 --2345, Jun.
  2008.

\bibitem{Blumensath09Sampling}
T.~Blumensath and M.~Davies, ``Sampling theorems for signals from the union of
  finite-dimensional linear subspaces,'' \emph{IEEE Trans. Inf. Theory},
  vol.~55, no.~4, pp. 1872 --1882, April 2009.

\bibitem{Eldar12Compressed}
Y.~C. Eldar and G.~Kutyniok, \emph{Compressed Sensing Theory and Applications},
  1st~ed.\hskip 1em plus 0.5em minus 0.4em\relax Cambridge University Press,
  2012.

\bibitem{Foucart13Mathematical}
S.~Foucart and H.~Rauhut, \emph{A Mathematical Introduction to Compressive
  Sensing}, 1st~ed.\hskip 1em plus 0.5em minus 0.4em\relax Springer Publishing
  Company, Incorporated, 2013.

\bibitem{Plan14High}
Y.~Plan, R.~Vershynin, and E.~Yudovina, ``High-dimensional estimation with
  geometric constraints,'' \emph{CoRR}, vol. abs/1404.3749, 2014.

\bibitem{Vershynin14High}
R.~Vershynin, ``Estimation in high dimensions: a geometric perspective,''
  \emph{CoRR}, vol. abs/1405.5103, 2014.

\bibitem{Donoho06Compressed}
D.~Donoho, ``Compressed sensing,'' \emph{IEEE Trans. Inf. Theory}, vol.~52,
  no.~4, pp. 1289--1306, April 2006.

\bibitem{Candes06Near}
E.~J. Cand\`{e}s and T.~Tao, ``Near-optimal signal recovery from random
  projections: Universal encoding strategies?'' \emph{IEEE Trans. Inf. Theory},
  vol.~52, no.~12, pp. 5406 --5425, Dec. 2006.

\bibitem{Candes09Exact}
E.~J. Cand\`{e}s and B.~Recht, ``Exact matrix completion via convex
  optimization,'' \emph{Found. Comput. Math.}, vol.~9, no.~6, pp. 717--772,
  Dec. 2009.

\bibitem{Donoho06Stable}
D.~L. Donoho, M.~Elad, and V.~N. Temlyakov, ``Stable recovery of sparse
  overcomplete representations in the presence of noise,'' \emph{{IEEE} Trans.
  Inf. Theory}, vol.~52, no.~1, pp. 6--18, Jan. 2006.

\bibitem{Candes05Decoding}
E.~Cand\`{e}s and T.~Tao, ``Decoding by linear programming,'' \emph{{IEEE}
  Trans. Inf. Theory}, vol.~51, no.~12, pp. 4203 -- 4215, dec. 2005.

\bibitem{Candes07Dantzig}
------, ``The {Dantzig} selector: Statistical estimation when p is much larger
  than n,'' \emph{Annals Of Statistics}, vol.~35, p. 2313, 2007.

\bibitem{Donoho03Optimal}
D.~L. Donoho and M.~Elad, ``Optimal sparse representation in general
  (nonorthogoinal) dictionaries via l1 minimization,'' \emph{Proceedings of the
  National Academy of Science}, vol. 100, pp. 2197--2202, Mar 2003.

\bibitem{Giryes13CanP0}
R.~Giryes and M.~Elad, ``Can we allow linear dependencies in the dictionary in
  the synthesis framework?'' in \emph{IEEE International Conference on
  Acoustics, Speech and Signal Processing (ICASSP)}, 2013, pp. 5459 -- 5463.

\bibitem{NP-Hard}
B.~Natarajan, ``Sparse approximate solutions to linear systems,'' \emph{SIAM
  Journal on Computing}, vol.~24, pp. 227--234, 1995.

\bibitem{Chen98overcomplete}
S.~S. Chen, D.~L. Donoho, and M.~A. Saunders, ``Atomic decomposition by basis
  pursuit,'' \emph{SIAM Journal on Scientific Computing}, vol.~20, no.~1, pp.
  33--61, 1998.

\bibitem{MallatZhang93}
S.~Mallat and Z.~Zhang, ``Matching pursuits with time-frequency dictionaries,''
  \emph{IEEE Trans. Signal Process.}, vol.~41, pp. 3397--3415, 1993.

\bibitem{Needell10Signal}
D.~Needell and R.~Vershynin, ``Signal recovery from incomplete and inaccurate
  measurements via regularized orthogonal matching pursuit,'' \emph{IEEE
  Journal of Selected Topics in Signal Processing}, vol.~4, no.~2, pp. 310
  --316, Apr. 2010.

\bibitem{Needell09CoSaMP}
D.~Needell and J.~Tropp, ``{CoSaMP}: Iterative signal recovery from incomplete
  and inaccurate samples,'' \emph{Appl. Comput. Harmon. A.}, vol.~26, no.~3,
  pp. 301 -- 321, May 2009.

\bibitem{Dai09Subspace}
W.~Dai and O.~Milenkovic, ``Subspace pursuit for compressive sensing signal
  reconstruction,'' \emph{IEEE Trans. Inf. Theory}, vol.~55, no.~5, pp. 2230
  --2249, May 2009.

\bibitem{Blumensath09Iterative}
T.~Blumensath and M.~Davies, ``Iterative hard thresholding for compressed
  sensing,'' \emph{Appl. Comput. Harmon. Anal}, vol.~27, no.~3, pp. 265 -- 274,
  2009.

\bibitem{Foucart11Hard}
S.~Foucart, ``Hard thresholding pursuit: an algorithm for compressive
  sensing,'' \emph{SIAM J. Numer. Anal.}, vol.~49, no.~6, pp. 2543--2563, 2011.

\bibitem{Rudelson06Sparse}
M.~Rudelson and R.~Vershynin, ``Sparse reconstruction by convex relaxation:
  Fourier and gaussian measurements,'' in \emph{Information Sciences and
  Systems, 2006 40th Annual Conference on}, 22-24 2006, pp. 207 --212.

\bibitem{Bickel09Simultaneous}
P.~Bickel, Y.~Ritov, and A.~Tsybakov, ``Simultaneous analysis of lasso and
  dantzig selector,'' \emph{Annals of Statistics}, vol.~37, no.~4, pp.
  1705--1732, 2009.

\bibitem{Giryes12RIP}
R.~Giryes and M.~Elad, ``{RIP}-based near-oracle performance guarantees for
  {SP}, {CoSaMP}, and {IHT},'' \emph{IEEE Trans. Signal Process.}, vol.~60,
  no.~3, pp. 1465--1468, March 2012.

\bibitem{Candes10Matrix}
E.~Candes and Y.~Plan, ``Matrix completion with noise,'' \emph{Proceedings of
  the IEEE}, vol.~98, no.~6, pp. 925--936, June 2010.

\bibitem{Eldar12Uniqueness}
\BIBentryALTinterwordspacing
Y.~Eldar, D.~Needell, and Y.~Plan, ``Uniqueness conditions for low-rank matrix
  recovery,'' \emph{Applied and Computational Harmonic Analysis}, vol.~33,
  no.~2, pp. 309 -- 314, 2012. [Online]. Available:
  \url{http://www.sciencedirect.com/science/article/pii/S1063520312000528}
\BIBentrySTDinterwordspacing

\bibitem{baraniuk2009random}
R.~G. Baraniuk and M.~B. Wakin, ``Random projections of smooth manifolds,''
  \emph{Foundations of computational mathematics}, vol.~9, no.~1, pp. 51--77,
  2009.

\bibitem{wakin2010manifold}
M.~B. Wakin, ``Manifold-based signal recovery and parameter estimation from
  compressive measurements,'' \emph{arXiv preprint arXiv:1002.1247}, 2010.

\bibitem{yap2011stable}
H.~L. Yap, M.~B. Wakin, and C.~J. Rozell, ``Stable manifold embeddings with
  operators satisfying the restricted isometry property,'' in \emph{Information
  Sciences and Systems (CISS), 2011 45th Annual Conference on}.\hskip 1em plus
  0.5em minus 0.4em\relax IEEE, 2011, pp. 1--6.

\bibitem{eftekhari2013new}
A.~Eftekhari and M.~B. Wakin, ``New analysis of manifold embeddings and signal
  recovery from compressive measurements,'' \emph{arXiv preprint
  arXiv:1306.4748}, 2013.

\bibitem{elad07Analysis}
M.~Elad, P.~Milanfar, and R.~Rubinstein, ``Analysis versus synthesis in signal
  priors,'' \emph{Inverse Problems}, vol.~23, no.~3, pp. 947--968, June 2007.

\bibitem{Nam12Cosparse}
S.~Nam, M.~Davies, M.~Elad, and R.~Gribonval, ``The cosparse analysis model and
  algorithms,'' \emph{Appl. Comput. Harmon. Anal.}, vol.~34, no.~1, pp. 30 --
  56, 2013.

\bibitem{Needell13Stable}
D.~Needell and R.~Ward, ``Stable image reconstruction using total variation
  minimization,'' \emph{SIAM Journal on Imaging Sciences}, vol.~6, no.~2, pp.
  1035--1058, 2013.

\bibitem{lee2003smooth}
J.~M. Lee, \emph{Smooth manifolds}.\hskip 1em plus 0.5em minus 0.4em\relax
  Springer, 2003.

\bibitem{Candes11Compressed}
E.~J. Cand\`{e}s, Y.~C. Eldar, D.~Needell, and P.~Randall, ``Compressed sensing
  with coherent and redundant dictionaries,'' \emph{Appl. Comput. Harmon.
  Anal}, vol.~31, no.~1, pp. 59 -- 73, 2011.

\bibitem{Liu12Compressed}
Y.~Liu, T.~Mi, and S.~Li, ``Compressed sensing with general frames via
  optimal-dual-based l1-analysis,'' \emph{{IEEE} Trans. Inf. Theory}, vol.~58,
  no.~7, pp. 4201--4214, 2012.

\bibitem{Giryes13Greedy}
R.~Giryes, S.~Nam, M.~Elad, R.~Gribonval, and M.~Davies, ``Greedy-like
  algorithms for the cosparse analysis model,'' \emph{Linear Algebra and its
  Applications}, vol. 441, no.~0, pp. 22 -- 60, 2014, special Issue on Sparse
  Approximate Solution of Linear Systems.

\bibitem{Kabanava13Analysis}
M.~Kabanava and H.~Rauhut, ``Analysis $l_1$-recovery with frames and {G}aussian
  measurements,'' \emph{Preprint}, 2013.

\bibitem{Giryes13TDIHT}
R.~Giryes, ``A greedy algorithm for the analysis transform domain,'' \emph{to
  appear in Neurocomputing}, 2015.

\bibitem{Serfling74Probability}
R.~J. Serfling, ``Probability inequalities for the sum in sampling without
  replacement,'' \emph{Annals Of Statistics}, vol.~2, no.~1, pp. 39--48, 1974.

\bibitem{Hoeffding63Probability}
W.~Hoeffding, ``Probability inequalities for sums of bounded random
  variables,'' \emph{Journal of the American Statistical Association}, vol.~58,
  no. 301, pp. 13--30, 1963.

\bibitem{Talagrand91Probability}
M.~Ledoux and M.~Talagrand, \emph{Probability in Banach Spaces}.\hskip 1em plus
  0.5em minus 0.4em\relax Springer-Verlag, 1991.

\bibitem{pisier1999volume}
G.~Pisier, \emph{The volume of convex bodies and Banach space geometry}.\hskip
  1em plus 0.5em minus 0.4em\relax Cambridge University Press, 1999, vol.~94.

\bibitem{guntuboyina}
A.~Guntuboyina, ``Lower bounds for the minimax risk using-divergences, and
  applications,'' \emph{IEEE Transactions on Information Theory}, vol.~57,
  no.~4, pp. 2386--2399, 2011.

\bibitem{le1986asymptotic}
L.~Le~Cam, \emph{Asymptotic methods in statistical decision theory}.\hskip 1em
  plus 0.5em minus 0.4em\relax Springer New York, 1986.

\end{thebibliography}

\end{document}